\def\Im{{\rm Im}\,}
\def\sgn{{\rm sgn}}
\def\Rnum{\mathbb{R}}
\def\const{\text{const}}
\def\txtint{{\textstyle\int}}
\newtheorem{prop}{Proposition}
\newtheorem{thm}{Theorem}
\newtheorem{cor}{Corollary}
\newtheorem{lem}{Lemma}
\def\Ref#1{Ref.~\cite{#1}}
\def\secref#1{Sec.~\ref{#1}}
\def\scrpt#1{$\scriptstyle {#1}$}
\begin{document}
\allowdisplaybreaks[3]

\title{Accelerating dynamical peakons and their behaviour}

\author{
Stephen C. Anco$^1$
\\\lowercase{\scshape{ and }}\\
Elena Recio$^2$ 
\\
\\\lowercase{\scshape{
${}^1$Department of Mathematics and Statistics\\
Brock University\\
St. Catharines, ON L\scrpt2S\scrpt3A\scrpt1, Canada}} \\\\
\lowercase{\scshape{
${}^2$Department of Mathematics\\
Faculty of Sciences, Universidad de C\'adiz\\
Puerto Real, C\'adiz, Spain, 11510}}\\
}

\thanks{sanco@brocku.ca, elena.recio@uca.es.}

\begin{abstract}
A wide class of nonlinear dispersive wave equations are shown to possess 
a novel type of peakon solution in which the amplitude and speed of the peakon are time-dependent. 
These novel dynamical peakons exhibit a wide variety of different behaviours 
for their amplitude, speed, and acceleration, 
including an oscillatory amplitude and constant speed which describes a peakon breather.
Examples are presented of families of nonlinear dispersive wave equations that illustrate 
various interesting behaviours, such as 
asymptotic travelling-wave peakons,  
dissipating/anti-dissipating peakons, 
direction-reversing peakons, 
runaway and blow up peakons,
among others. 
\end{abstract}

\maketitle

\section{Introduction}

Peakons are peaked travelling waves of the form $u(x,t)=a\exp(-|x-ct|)$
which were first found as weak solutions for the Camassa-Holm (CH) equation \cite{CamHol}
$u_t -u_{txx} + 3uu_x -2u_xu_{xx} -uu_{xxx} =0$,
with the speed-amplitude relation being $c=a$. 
Several other similar peakon equations are well known:
Degasperis-Procesi (DP) equation \cite{DegPro,DegHolHon}
 $u_t -u_{txx} + 4 uu_x -3u_xu_{xx} -uu_{xxx} =0$,
with $c=a$; 
Novikov (N) equation \cite{Nov}
$u_t -u_{txx} + 4u^2u_x -3uu_xu_{xx} -u^2u_{xxx} =0$,
with $c=a^2$;
modified Camassa-Holm (mCH) equation (also known as FORQ equation) \cite{Fok95a,Olv-Ros,Fok-Olv-Ros,Fuc}
$u_t -u_{txx} + 3u^2u_x -u_x^3-4uu_xu_{xx} +2u_xu_{xx}^2 -(u^2-u_x^2)u_{xxx} =0$, 
with $c=\tfrac{2}{3}a^2$.
Much of the interest in these equations is that, 
firstly, they are integrable systems 
having a Lax pair, bi-Hamiltonian structure, hierarchies of symmetries and conservation laws; 
secondly, 
the peakon solutions are orbitally stable
\cite{ConStr,Len,LiuLin},
which implies the shape of the peakon is unchanged under small perturbations; 
thirdly, 
they possess $N$-peakon weak solutions given by 
a linear superposition of single peakons with time-dependent amplitudes and speeds;
and fourthly, 
they exhibit wave breaking 
in which certain smooth initial data yields solutions whose gradient $u_x$ 
blows up in a finite time while $u$ stays bounded
\cite{ConEsc1998a,ConEsc1998b,EscLiuYin,LiuOlvQuZha,GuiLiuOlvQu,JiaNi}.
Moreover, the CH and DP equations arise as models for water waves
\cite{Joh,ConLan,Ion-Kru}, 
and it is known that the travelling wave solutions of greatest height 
for the Euler equations governing water waves 
have a peak at their crest
(see \cite{Con06,Con12,ConEsc07,Tol}). 

All of these equations, 
and their various modified versions and nonlinear generalizations \cite{HolHon,MiMu,GraHim,AncSilFre,HimMan2016a,HimMan2016b,AncRec}, 
belong to the general family of nonlinear dispersive wave equations 
\begin{equation}\label{fg-fam}
m_t + f(u,u_x)m +(g(u,u_x)m)_x=0, 
\quad 
m=u-u_{xx}  
\end{equation} 
where $f$ and $g$ are arbitrary non-singular functions of $u$ and $u_x$. 
Remarkably, as shown in recent work \cite{RecAnc2016}, 
every equation in this family \eqref{fg-fam}
possesses $N$-peakon weak solutions 
\begin{equation}\label{Npeakon}
u(x,t)=\sum_{i=1}^{N} a_i(t) \exp(-|x-x_i(t)|), 
\quad 
N\geq 1
\end{equation}
whose time-dependent amplitudes $a_i(t)$ and positions $x_i(t)$
satisfy a nonlinear dynamical system consisting of $2N$ coupled ODEs 
\begin{equation}\label{fgpeakonsys}
\dot a_i = \tfrac{1}{2} [F(u,u_x)]_{x_i},
\quad
\dot x_i = -\tfrac{1}{2} [G(u,u_x)]_{x_i}/a_i
\end{equation}
in terms of $F=\int f(u,u_x)\,d u_x$ and $G=\int g(u,u_x)\,d u_x$,
where square brackets denote the jump at a discontinuity.
In the case $N=1$, 
single peakon solutions are travelling waves $u(x,t)=a\exp(|x-ct|)$ with $c\neq 0$ 
if and only if $F$ and $G$ have the properties \cite{RecAnc2016}
\begin{equation}\label{FG-cond}
F(a,a)=F(a,-a),
\quad
G(a,a)\neq G(a,-a) 
\end{equation}
for arbitrary $a\in\Rnum$. 

These results have two important consequences: 
multi-peakons exist for nonlinear dispersive wave equations \eqref{fg-fam}
without the need for any integrability properties or any Hamiltonian structure,
while single peakon travelling waves exist only when the nonlinearities $f$ and $g$ in the wave equation satisfy certain conditions. 
This is a sharp contrast to the situation for soliton solutions of nonlinear dispersive evolution equations, 
where solitary travelling waves in general exist without any restrictions on the nonlinearity in the equation, 
while the existence of $N$-soliton solutions for arbitrary $N>2$ usually requires that 
the equation be integrable. 

The purpose of the present work is to study novel $1$-peakon solutions 
$u(x,t)=A(t) \exp(|x-X(t)|)$ 
that are more general than travelling waves. 
We will refer to these solutions as \emph{dynamical peakons}. 
They arise whenever the nonlinearities $f$ and $g$ in a nonlinear dispersive wave equation \eqref{fg-fam}
fail to satisfy the necessary and sufficient conditions \eqref{FG-cond}
for a $1$-peakon to have a constant amplitude and a constant non-zero speed. 

Dynamical peakons are a new nonlinear phenomena that have not been previously recognized to exist.

In general, 
dynamical peakons have a time-dependent amplitude, $\dot A\neq 0$,
and either a time-dependent speed, $\ddot X\neq 0$,
or a constant speed, $\dot X=\const$. 
Their amplitude, speed, and acceleration 
can exhibit a wide variety of different behaviours, 
including an oscillatory amplitude and constant speed which describes a peakon breather.
Other behaviours for the amplitude are 
a finite-time blow-up or extinction, 
and a long-time unboundedness, or extinction, or finite asymptote. 
For the speed and acceleration, novel behaviours are 
a finite-time braking or runaway or wheelspin-limit, 
and a long-time braking, runaway, or finite asymptotic limit,
as well as direction reversal.

In \secref{conditions},
we write out the general class of nonlinearities $f$ and $g$ 
for which dynamical peakons exist, 
and we determine the special class of these nonlinearities 
such that dynamical peakons are accelerating. 
We also discuss the connection of these conditions to the absence of conservation laws 
for momentum and the Sobolev norm. 
In \secref{results},
we state conditions on the nonlinearities $f$ and $g$ that yield 
various interesting types of behaviour for the amplitude and speed of accelerating dynamical peakons. 
We illustrate these conditions by classifying the behaviour of all dynamical peakons 
for a family of wave equations in which the nonlinearities $f$ and $g$ are powers of $u$. 
In \secref{examples}, 
we give explicit examples of asymptotic travelling-wave peakons, 
dissipating/anti-dissipating peakons, blowing-up peakons, direction reversing peakons, 
and peakon breathers. 
We make some concluding remarks in \secref{remarks}.

\section{Dynamical peaked waves}
\label{conditions}

A dynamical peakon is the $N=1$ case of the $N$-peakon weak solution \eqref{Npeakon},
which we will write in the form 
\begin{equation}\label{1peakon}
u(x,t)= A(t) e^{-|x-X(t)|} . 
\end{equation}
Physically, this type of peakon describes a dynamically evolving peaked wave,
with amplitude $A(t)$ and position $X(t)$. 

For any nonlinear dispersive wave equation \eqref{fg-fam}
with given nonlinearities $f(u,u_x)$ and $g(u,u_x)$, 
the dynamics of $A(t)$ and $X(t)$ are described by the coupled nonlinear ODEs 
\begin{equation}\label{1peakon-ODEs}
\dot A = \tfrac{1}{2} [F(u,u_x)]_{X},
\quad
\dot X = -\tfrac{1}{2} [G(u,u_x)]_{X}/A
\end{equation}
which arise directly from the $N=1$ case of the $N$-peakon dynamical system \eqref{fgpeakonsys},
with 
\begin{equation}\label{FG}
F=\int f\,d u_x,
\quad
G=\int g\,d u_x .
\end{equation}
Since $u_x = -\sgn(x-X) A e^{-|x-X|}$ is jump discontinuous at $x=X$, 
we see that the resulting jumps in $F$ and $G$ which occur in the ODEs \eqref{1peakon-ODEs}
are given by 
$[F(u,u_x)]_{X}= F(A,-A)-F(A,A)=-2F^-(A,A)$ 
and $[G(u,u_x)]_{X}= G(A,-A)-G(A,A) = -2G^-(A,A)$,
where 
\begin{equation}\label{FG-decomp}
F^\mp(u,u_x) = \tfrac{1}{2}( F(u,u_x)\mp F(u,-u_x) ), 
\quad
G^\mp(u,u_x) = \tfrac{1}{2}( G(u,u_x)\mp G(u,-u_x) )
\end{equation}
are the odd and even parts of $F$ and $G$ under a reflection $(u,u_x)\to (u,-u_x)$. 
The ODEs can thus be expressed succinctly as
\begin{equation}\label{ODEs}
\dot A = -F^-(A,A) , 
\quad
\dot X = G^-(A,A)/A  .
\end{equation}
Since these ODEs \eqref{ODEs} only involve the odd parts of $F$ and $G$,
we can write them in an equivalent form in terms of the even parts of $f$ and $g$,
by using the relations 
$F^-(A,A) = \int_{-A}^{A} f^+(A,u_x)\, du_x$
and $G^-(A,A) = \int_{-A}^{A} g^+(A,u_x)\, du_x$,
where \label{fg-decomp}
\begin{equation}
f^\pm(u,u_x) = \tfrac{1}{2}( f(u,u_x)\pm f(u,-u_x) ), 
\quad
g^\pm(u,u_x) = \tfrac{1}{2}( g(u,u_x)\pm g(u,-u_x) )
\end{equation}
are the even and odd parts of $f$ and $g$. 
This establishes the following result. 

\begin{prop}\label{prop:dynpeakedwave}
(i)
The amplitude $A(t)$ and position $X(t)$ of dynamical peaked waves \eqref{1peakon}
satisfy the coupled nonlinear ODEs 
\begin{equation}\label{AXsys}
\dot A = -\int_{-A}^{A} f^+(A,y)\, dy, 
\quad
\dot X = (1/A)\int_{-A}^{A} g^+(A,y)\, dy
\end{equation}
in terms of the nonlinearities $f(u,u_x)$ and $g(u,u_x)$ 
in a given nonlinear dispersive wave equation \eqref{fg-fam}. 
(ii) 
A dynamical peaked wave will be a travelling-wave peakon, with a non-zero speed, 
if and only if 
$A=a$ is an arbitrary constant 
and $\dot X=c$ is a non-zero constant. 
These two conditions hold when (and only when) $f(u,u_x)$ and $g(u,u_x)$ satisfy
\begin{equation}\label{FGodd-cond}
\int_{-a}^{a} f^+(a,y)\, dy =0, 
\quad
\int_{-a}^{a} g^+(a,y)\, dy \neq 0, 
\quad
\text{for all }
a\in\Rnum . 
\end{equation}
\end{prop}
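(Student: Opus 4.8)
The plan is to treat part (i) as an immediate consequence of the derivation preceding the statement, and to devote the real work to the two equivalences in part (ii).

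For part (i), I would simply assemble the pieces already in hand. Starting from the $N=1$ peakon ODEs \eqref{1peakon-ODEs}, I would record the one-sided limits $u\to A$ and $u_x\to\mp A$ as $x\to X^\pm$, evaluate the jumps $[F]_X$ and $[G]_X$ using the reflection decomposition \eqref{FG-decomp}, and substitute to obtain $\dot A=-F^-(A,A)$ and $\dot X=G^-(A,A)/A$. Rewriting the antiderivatives $F,G$ as definite integrals via the fundamental theorem of calculus, and discarding the odd parts of $f,g$ (which integrate to zero over the symmetric interval $[-A,A]$), converts these into the relations $F^-(A,A)=\int_{-A}^{A}f^+(A,y)\,dy$ and $G^-(A,A)=\int_{-A}^{A}g^+(A,y)\,dy$ already recorded above, which gives \eqref{AXsys}.

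For the first equivalence in part (ii), I would argue directly from the meaning of a travelling wave. A travelling-wave peakon has the form $u(x,t)=a\,e^{-|x-ct|}$, and matching this against the dynamical peakon ansatz \eqref{1peakon} forces $A(t)\equiv a$ and $X(t)=ct+x_0$, so that $\dot X=c$; the speed is non-zero precisely when $c\neq0$. Conversely, if $A\equiv a$ is constant and $\dot X=c$ is a non-zero constant, then $X(t)=ct+x_0$ and \eqref{1peakon} reduces to $a\,e^{-|x-ct-x_0|}$, a travelling wave of speed $c\neq0$. This settles the equivalence with essentially no computation.

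For the second equivalence, I would feed the two conditions into the ODEs \eqref{AXsys}. Requiring $A\equiv a$ for an arbitrary constant $a$ means $\dot A=0$ whenever $A=a$, which by the first ODE is exactly $\int_{-a}^{a}f^+(a,y)\,dy=0$; since $a$ ranges over all admissible amplitudes this must hold for every $a\in\Rnum$. The key observation is that once $A$ is frozen at $a$, the right-hand side of the second ODE depends on $t$ only through $A$, so $\dot X=(1/a)\int_{-a}^{a}g^+(a,y)\,dy$ is automatically constant; it is non-zero exactly when $\int_{-a}^{a}g^+(a,y)\,dy\neq0$, again for every $a$. Running the implications in reverse shows that \eqref{FGodd-cond} forces $\dot A=0$ together with a non-zero constant $\dot X$. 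The point needing the most care is the quantifier ``for all $a$'': a single fixed-amplitude travelling wave would constrain $f,g$ only at one value of $a$, so one must use that a travelling-wave peakon is required to exist for arbitrary amplitude in order to promote the pointwise requirements to identities in $a$.
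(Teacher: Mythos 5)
Your proposal is correct and follows essentially the same route as the paper: derive $\dot A=-F^-(A,A)$ and $\dot X=G^-(A,A)/A$ from the jump computation at $x=X$, rewrite these via the even parts of $f,g$ integrated over the symmetric interval to get \eqref{AXsys}, and then read off the travelling-wave conditions \eqref{FGodd-cond} directly from those ODEs using the arbitrariness of the constant amplitude. The paper treats part (ii) as immediate from \eqref{AXsys}; your explicit attention to the ``for all $a$'' quantifier merely fills in a detail the paper leaves implicit.
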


We now derive an explicit form of the nonlinearities $f$ and $g$ 
for which dynamical peaked waves are more general than travelling-wave peakons. 
In particular, we find the general solution of the conditions \eqref{FGodd-cond}
for $f$ and $g$. 
This will be accomplished by starting from the decomposition of $F$ and $G$ 
into odd and even parts \eqref{FG-decomp} with respect to the reflection $u_x\to -u_x$. 

First, we express $F^- = u_x \tilde F^+$ and $G^- = u_x \tilde G^+$,
where $\tilde F^+,\tilde G^+$ are reflection-invariant functions of $u,u_x$. 
Next, from relation \eqref{FG}, we have 
$f= F_{u_x} = F^+_{u_x} +\tilde F^+ + u_x\tilde F^+_{u_x}$
and 
$g= G_{u_x} = G^+_{u_x} +\tilde G^+ + u_x\tilde G^+_{u_x}$. 
Matching these expressions to the decomposition of $f$ and $g$ into 
odd and even parts \eqref{fg-decomp}, 
we obtain the relations 
$f^-=F^+_{u_x}$ and $f^+ = \tilde F^+ + u_x\tilde F^+_{u_x}$,
and likewise 
$g^-= G^+_{u_x}$ and $g^+ = \tilde G^+ + u_x\tilde G^+_{u_x}$,
since all of $F^+_{u_x},\tilde F^+_{u_x},G^+_{u_x},\tilde G^+_{u_x}$
are odd in $u_x$. 
Then these relations determine 
$F^+ = \int_{u}^{u_x} f^-\, du_x +F_0(u)$
and 
$G^+ = \int_{u}^{u_x} g^-\, du_x +G_0(u)$, 
as well as
$\tilde F^+ = \int_{u}^{u_x} \tilde f^-\, du_x +f_0(u)$,
and 
$\tilde G^+ = \int_{u}^{u_x} \tilde g^-\, du_x + g_0(u)$,
where we have written 
$\tilde f^-= \tilde F^+_{u_x}$ and $\tilde g^-= \tilde G^+_{u_x}$. 
As a result, 
we obtain 
\begin{align}
f(u,u_x) & =f^-(u,u_x) + \int_{u}^{u_x} \tilde f^-(u,y)\, dy +u_x \tilde f^-(u,u_x)  +f_0(u), 
\label{f-expr}
\\
g(u,u_x) & = g^-(u,u_x) + \int_{u}^{u_x} \tilde g^-(u,y)\, dy +u_x \tilde g^-(u,u_x)  +g_0(u),
\label{g-expr}
\end{align}
and
\begin{align}
F(u,u_x) & =\int_{u}^{u_x} f^-(u,y)\, dy +u_x\int_{u}^{u_x} \tilde f^-(u,y)\, dy +f_0(u)u_x +F_0(u) , 
\label{F-expr}
\\
G(u,u_x) & =\int_{u}^{u_x} g^-(u,y)\, dy +u_x\int_{u}^{u_x} \tilde g^-(u,y)\, dy +g_0(u)u_x +G_0(u) , 
\label{G-expr}
\end{align}
where the functions $f^-,g^-,\tilde f^-,\tilde g^-$ are odd in $u_x$. 
This representation for the functions $f,g,F,G$ is nicely adapted to the form of 
the travelling-wave conditions \eqref{FGodd-cond},
since we have 
$F^-(a,a) = a f_0(a)$ and $G^-(a,a) = a g_0(a)$.
As an immediate consequence, 
the following classification of nonlinearities is obtained. 

\begin{lem}\label{lem:dotA}
A nonlinear dispersive wave equation \eqref{fg-fam}
possesses dynamical peaked waves \eqref{1peakon}
with a time-dependent amplitude $A(t)$ if and only if 
the nonlinearity $f(u,u_x)$ satisfies $\int_{-a}^{a} f^+(a,y)\, dy \neq 0$ for some $a\in \Rnum$. 
This inequality is equivalent to the condition that the term $f_0(u)$ 
in the representation \eqref{f-expr} for $f(u,u_x)$ is non-zero. 
\end{lem}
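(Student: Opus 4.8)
The plan is to exploit the fact that the amplitude equation in \eqref{AXsys} decouples from the position equation: $A(t)$ satisfies the autonomous ODE $\dot A = -\Phi(A)$, where I write $\Phi(a) := \int_{-a}^{a} f^+(a,y)\,dy$. Because $f$ is non-singular, $\Phi$ is continuous, so this ODE has a local solution through every initial value $A(0)=a_0$. I would first prove the equivalence between having a time-dependent amplitude and the condition $\Phi(a)\neq 0$ for some $a\in\Rnum$. For one direction I argue by contraposition: if $\Phi\equiv 0$, then $\dot A\equiv 0$ along every solution, so each dynamical peaked wave \eqref{1peakon} necessarily has constant amplitude. For the converse, if $\Phi(a_0)\neq 0$ for some $a_0$, then the solution started at $A(0)=a_0$ has $\dot A(0)\neq 0$ and therefore a genuinely non-constant amplitude, exhibiting a dynamical peaked wave of the asserted type.

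The second step is to identify the inequality $\Phi(a)\neq 0$ (for some $a$) with $f_0\not\equiv 0$. Here I would substitute the explicit representation \eqref{f-expr} of $f$ into $\Phi$ and track parities under $u_x\to -u_x$. The term $f^-$ is odd and so contributes nothing to the even part $f^+$, while the two terms built from $\tilde f^-$ are even and cancel one another after integration over the symmetric interval $[-a,a]$ (an integration by parts pairs them exactly). Only the reflection-invariant term $f_0(u)$ survives. Equivalently, and more economically, I can simply invoke the relations already recorded just above the lemma, namely $\int_{-a}^{a} f^+(a,y)\,dy = F^-(a,a)$ together with $F^-(a,a)=a f_0(a)$, which follow from the decomposition \eqref{F-expr}. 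Either way I arrive at $\Phi(a) = a f_0(a)$.

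It then remains to conclude. If $f_0\equiv 0$ then $\Phi\equiv 0$; conversely, if $f_0\not\equiv 0$, then continuity of $f_0$ (inherited from non-singularity of $f$) forces $f_0$ to be nonzero on an open interval, hence nonzero at some $a_0\neq 0$, so that $\Phi(a_0)=a_0 f_0(a_0)\neq 0$. This yields the equivalence $\Phi\not\equiv 0 \iff f_0\not\equiv 0$ and closes the chain of implications.

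The argument is short, and the only genuinely delicate point is this last one: because of the explicit factor of $a$ in $\Phi(a)=a f_0(a)$, one must use continuity of $f_0$ to rule out the degenerate possibility that $f_0$ vanishes everywhere except at $a=0$, where the factor $a$ would suppress it in any case. The parity cancellation of the $\tilde f^-$-terms is routine but is the one computation that has to be verified if one does not borrow the recorded relation $F^-(a,a)=a f_0(a)$.
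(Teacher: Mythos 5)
Your argument is correct and follows the paper's own route: the paper derives the representation \eqref{f-expr} precisely so that $F^-(a,a)=af_0(a)$, and then states the lemma as an immediate consequence of the amplitude ODE $\dot A=-F^-(A,A)$. You supply the details the paper leaves implicit --- the integration-by-parts cancellation of the two $\tilde f^-$ terms and the continuity argument ruling out the degenerate case of $f_0$ vanishing everywhere except at $a=0$ --- and the only blemish is a harmless factor of $2$ (the symmetric integral of $f^+$ actually equals $2af_0(a)$, a normalization wrinkle inherited from the paper's own identification of $F^-(a,a)$ with $\int_{-a}^{a}f^+(a,y)\,dy$), which does not affect the non-vanishing equivalence.
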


We next derive a similar necessary and sufficient condition 
for dynamical peaked waves to be accelerating. 
From the ODEs \eqref{AXsys}, 
the acceleration of $X(t)$ is given by 
\begin{equation}\label{accelODE}
\ddot X = -F^-(A,A) \alpha(A) 
\end{equation}
where 
$\alpha(A) = \big(A(g^+(A,A)+ G^-_{u}(A,A)) -G^-(A,A))\big)/A^2$. 
Using the representations \eqref{g-expr} and \eqref{G-expr} for $g$ and $G$, 
we find $g^+(A,A) = A\tilde g^-(A,A) +g_0(A)$ 
and $G^-_{u}(A,A) = A(g_0'(A)-\tilde g^-(A,A))$, 
which yields 
\begin{equation}
\alpha(A) = g_0'(A)
\end{equation}
This expression, combined with the preceding expressions for $\ddot X$, 
leads to the following classification of nonlinearities. 

\begin{lem}\label{lem:dotX}
A nonlinear dispersive wave equation \eqref{fg-fam}
possesses dynamical peaked waves \eqref{1peakon}
with a non-zero acceleration $\ddot X(t)$ if and only if 
the amplitude $A(t)$ is time-dependent
and the nonlinearity $g(u,u_x)$ satisfies 
$\int_{-a}^{a} (g^+(a,y) - a g^+_{u}(a,y))\, dy \neq ag^+(a,a)$ for some $a\in \Rnum$. 
This inequality is equivalent to the condition that the term $g_0(u)$ 
in the representation \eqref{g-expr} for $g(u,u_x)$ is non-constant. 
\end{lem}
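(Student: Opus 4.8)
The plan is to exploit the fact that the acceleration has already been brought into a factored form in the preceding derivation. Combining the acceleration formula \eqref{accelODE}, $\ddot X = -F^-(A,A)\,\alpha(A)$, with the identity $\alpha(A)=g_0'(A)$ established just above, and recalling the amplitude equation $\dot A = -F^-(A,A)$ from \eqref{ODEs}, I would first record the clean product form $\ddot X = \dot A\, g_0'(A)$. Everything then follows from reading off when a product of two factors fails to vanish, so the two-part ``if and only if'' never needs a separate argument for each direction.

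Second, I would argue the equivalence directly from this product. Since $\ddot X$ is the product of $\dot A$ and $g_0'(A)$, there is a dynamical peaked wave with non-zero acceleration (i.e. $\ddot X\neq 0$ for some admissible amplitude value $a$) if and only if both factors are non-zero for that $a$. The factor $\dot A = -F^-(A,A)$ is non-zero precisely when the amplitude is time-dependent, which by Lemma~\ref{lem:dotA} is equivalent to $f_0\not\equiv 0$; this supplies the first hypothesis of the statement. The remaining factor $g_0'(A)$ is non-zero for some $A=a$ if and only if $g_0$ is non-constant. Thus the whole content reduces to translating the condition that $g_0$ is non-constant into the stated integral inequality in $g^+$.

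Third, the one computational step, I would carry out that translation. From $G^-(a,a)=a\,g_0(a)$ one has $g_0(a)=G^-(a,a)/a$, so differentiating in $a$ gives $a^2 g_0'(a) = a\big(g^+(a,a)+G^-_u(a,a)\big)-G^-(a,a)$, which is just $a^2\alpha(a)$ rewritten. I would then replace $G^-(a,a)$ and $G^-_u(a,a)$ by their integral forms in terms of $g^+$ and $g^+_u$, obtained from the relation $G^-(A,A)=\int_{-A}^{A} g^+(A,u_x)\,du_x$ together with its derivative in the first argument, using the evenness of $g^+$ in $u_x$ to evaluate the endpoint contributions on the diagonal $u_x=u=a$. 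Collecting terms, the numerator reduces to $a\,g^+(a,a)-\int_{-a}^{a}\big(g^+(a,y)-a\,g^+_u(a,y)\big)\,dy$, so $g_0'(a)\neq 0$ for some $a$ exactly when $\int_{-a}^{a}\big(g^+(a,y)-a\,g^+_u(a,y)\big)\,dy\neq a\,g^+(a,a)$, which is the asserted inequality; its equivalence with the condition that $g_0$ is non-constant is then immediate.

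The main obstacle is precisely the bookkeeping in this last step: one must pass from the diagonal relation for $G^-(a,a)$ to the genuine partial derivative $G^-_u(a,a)$ via the off-diagonal integral representation, and then evaluate carefully on $u_x=u=a$, tracking the endpoint terms so that the coefficient of $g^+(a,a)$ emerges in the normalization fixed by \eqref{FG} and \eqref{ODEs}. By contrast, the conceptual core, the factorization $\ddot X=\dot A\,g_0'(A)$, is immediate once $\alpha(A)=g_0'(A)$ is in hand, and it is what makes the two-condition statement transparent.
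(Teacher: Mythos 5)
Your proposal is correct and follows essentially the same route as the paper: the derivation immediately preceding the lemma establishes $\ddot X=-F^-(A,A)\,\alpha(A)$ together with $\alpha(A)=g_0'(A)$, so that $\ddot X=\dot A\,g_0'(A)$, and the lemma is then read off from this product exactly as you describe, with the integral inequality being $\alpha(a)\neq 0$ unwound via $G^-(a,a)=\int_{-a}^{a}g^+(a,y)\,dy$ in the paper's normalization. Your way of obtaining $\alpha=g_0'$ by differentiating $g_0(a)=G^-(a,a)/a$ is a marginally more direct path to the same identity than the paper's substitution of the representations \eqref{g-expr}--\eqref{G-expr}, but the substance is identical.
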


Since dynamical peaked waves are not classical solutions, 
we remark that some mild regularity conditions on the nonlinearities $f$ and $g$ 
are additionally needed in Proposition~\ref{prop:dynpeakedwave} and Lemmas~\ref{lem:dotA} and~\ref{lem:dotX}
so that a nonlinear dispersive wave equation \eqref{fg-fam} 
possesses a weak formulation.
In particular $f,g$ being $C^0$ in $u$ and $C^1$ in $u_x$ is sufficient. 

For doing analysis of a nonlinear dispersive wave equation \eqref{fg-fam}, 
such as proving local well-posedness and global existence, 
it can be useful to consider $f$ and $g$ to have stronger regularity. 
In particular, 
if $f$ and $g$ are, locally, analytic functions of $u$ and $u_x$, 
then the representations \eqref{f-expr} and \eqref{g-expr} of $f$ and $g$ 
become more complicated, as follows. 
First, 
we define the functions 
$f_1=f^-/u_x$, $f_2=\tilde f^-/u_x$, $g_1=g^-/u_x$, $g_2=\tilde g^-/u_x$, 
which are even in $u_x$. 
Next, we change variables from $(u,u_x)$ to $(v_+,v_-)=(u+u_x,u-u_x)$, 
whereby a reflection $u_x\to -u_x$ becomes a permutation $v_\pm\to v_\mp$. 
A useful result from invariant theory \cite{Olv-book2} is that 
any permutation-invariant function of $v_\pm$ that is also analytic 
can be expressed as a function of $v_++v_-=2u$ and $v_+v_-=u^2-u_x^2$. 
Hence we can take $f_1^+,f_2^+,g_1^+,g_2^+$ to be functions of $u$ and $u^2-u_x^2$. 
Then, the representations \eqref{f-expr} and \eqref{g-expr} are given by 
\begin{equation}
\begin{aligned}
f(u,u_x) & = f_0(u)+u_x f_2(u,u^2-u_x^2)  +u_x^2 f_1(u,u^2-u_x^2)  -\tfrac{1}{2}\int_{0}^{u^2-u_x^2} f_1(u,y)\, dy , 
\\
g(u,u_x) & = g_0(u)+u_x g_2(u,u^2-u_x^2)  +u_x^2 g_1(u,u^2-u_x^2)  -\tfrac{1}{2}\int_{0}^{u^2-u_x^2} g_1(u,y)\, dy ,
\end{aligned} 
\end{equation}
where $f_1,f_2,g_1,g_2$ are analytic functions of $u$ and $u^2-u^2_x$,
and $f_0,g_0$ are analytic functions of $u$. 

\subsection{Momentum and Sobolev norm}

The conditions stated in Lemmas~\ref{lem:dotA} and~\ref{lem:dotX}
have a connection to the momentum and the Sobolev norm
for solutions $u(x,t)$ of nonlinear dispersive wave equations \eqref{fg-fam}. 

The momentum and Sobolev ($H^1)$ norm of a solution $u(x,t)$ 
are given by the respective integrals
\begin{equation}\label{mom-H1norm}
M =\int_\Rnum m\,dx,
\quad
||u||_{H^1} = \int_\Rnum u^2 + u_x^2\,dx.
\end{equation}
When $u(x,t)$ has sufficiently rapid decay as $|x|\to\infty$, 
these integrals can be expressed as 
$M =\int_\Rnum u\,dx$
and $||u||_{H^1} = \int_\Rnum um\,dx$. 
In particular, $|u|=O(1)$ and $u_x=o(1)$ as $|x|\to\infty$ suffices. 

For dynamical peaked wave solutions \eqref{1peakon},
we can easily evaluate the integrals \eqref{mom-H1norm}, 
which yields 
\begin{equation} 
M = 2A(t), 
\quad
||u||_{H^1} = 2A(t)^2.
\end{equation}
Hence, 
the momentum and the Sobolev norm are conserved if and only if 
the amplitude $A(t)$ is constant. 
From Lemma~\ref{lem:dotX},
this implies that accelerating dynamical peaked waves do not conserve
both the momentum and the Sobolev norm. 

Consequently, the momentum and the Sobolev norm cannot be conservation laws 
holding for nonlinear dispersive wave equations \eqref{fg-fam} that possess
accelerating dynamical peaked waves. 
In \Ref{AncRec}, 
necessary and sufficient conditions on the nonlinearities $f$ and $g$ 
have been obtained for these two conservation laws to hold. 
The conditions consist of 
$f^-= u_x h_1(u^2-u_x^2)$, $f^+=h_0u/(u^2-u_x^2)$, 
for conservation of momentum; 
and 
$f^-=u_x (h^+(u,u_x) +h_1(u^2-u_x^2)/u)$, 
$f^+=u_x h^-(u,u_x) +h_0/(u^2-u_x^2)$, 
$g^-=uh^-(u,u_x) +h_0u/(u_x(u^2-u_x^2))$, 
$g^+=uh^+(u,u_x) -h_1(u^2-u_x^2)$, 
for conservation of the Sobolev norm. 
Here $h_0$ is an arbitrary constant, $h_1$ is an arbitrary function of $u^2-u_x^2$,
and $h$ is an arbitrary function of $u,u_x$. 
Therefore, 
a nonlinear dispersive wave equation \eqref{fg-fam} will possess
accelerating dynamical peaked waves if and only if the nonlinearities $f$ and $g$ 
do not have the above forms.

\section{Main results}
\label{results}

The coupled nonlinear ODEs \eqref{AXsys} governing dynamical peaked waves \eqref{1peakon}
comprise a separable system that has a straightforward quadrature
for the amplitude $A(t)$ and position $X(t)$ of the waves. 
We will look specifically at the situation 
when $\dot A(t)\neq 0$ for some time interval $t\in [t_1,t_2)\subseteq\Rnum$. 
Then the amplitude ODE yields 
\begin{equation}\label{A-integral}
\int_{A_1}^{A} \frac{dy}{F^-(y,y)} = t_1 -t
\end{equation}
which determines $A(t)$ by quadrature, 
while the position ODE gives
\begin{equation}\label{X-integral}
X-X_1 = \int_{t_1}^{t}\frac{G^-(A,A)}{A}\, dt 
= -\int_{A_1}^{A}\frac{G^-(y,y)}{yF^-(y,y)}\, dy 
\end{equation}
which determines $X(t)$ in terms of $A(t)$. 
Here we assume $F^-(A,A)\neq 0$ holds at the initial time $t=t_1$. 

This solution \eqref{A-integral}--\eqref{X-integral} can be written in another way, 
which brings out the role of acceleration. 
We begin by defining 
\begin{equation}\label{c-tau}
c(A) = \frac{G^-(A,A)}{A},
\quad
\tau(A) = -\int \frac{dA}{F^-(A,A)},
\end{equation}
which physically represent a speed function and a time function. 
Note $\tau(A)$ has an inverse at least locally near $A=A_1$. 
We next observe 
\begin{equation}\label{alpha}
c'(A) = \big( A(g^+(A,A) + G^-_{u}(A,A)) -G^-(A,A) \big)/A^2 = \alpha(A) 
\end{equation}
from the acceleration equation \eqref{accelODE}. 

\begin{prop}\label{prop:soln}
In terms of the functions \eqref{c-tau}--\eqref{alpha}, 
the amplitude and position of dynamical peaked waves \eqref{1peakon} 
are given by 
\begin{align}
A(t) & = \tau^{-1}(t-t_1+\tau_1),
\label{Asoln}
\\
X(t) & = X_1 +\int_{A_1}^{A(t)} \tau'(y) c(y)\, dy
= X_1 -c_1 \tau_1  + (t-t_1+\tau_1) c(A(t)) -\int_{A_1}^{A(t)} \tau(y)\alpha(y)\, dy,
\label{Xsoln}
\end{align}
where $\tau_1=\tau(A_1)$ and $c_1=c(A_1)$. 
These two expressions \eqref{Asoln}--\eqref{Xsoln}, 
or equivalently the integral expressions 
\begin{equation}\label{AX-integral}
\int_{A}^{A_1} \frac{dy}{yf_0(y)} = t-t_1 , 
\quad 
\int_{A}^{A_1}\frac{g_0(y)}{yf_0(y)}\, dy = X-X_1, 
\end{equation}
completely determine the time evolution of all dynamical peaked waves. 
\end{prop}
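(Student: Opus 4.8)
The plan is to verify the two equivalent characterizations in Proposition~\ref{prop:soln} by direct computation, starting from the quadrature solution \eqref{A-integral}--\eqref{X-integral} already established. The first task is to rewrite the amplitude solution \eqref{A-integral} using the time function $\tau(A)$ defined in \eqref{c-tau}. Since $\tau'(A)=-1/F^-(A,A)$, the integral $\int_{A_1}^{A} dy/F^-(y,y)$ appearing in \eqref{A-integral} equals $-(\tau(A)-\tau(A_1))=\tau_1-\tau(A)$. Setting this equal to $t_1-t$ gives $\tau(A)=t-t_1+\tau_1$, and inverting (which is permissible locally near $A_1$ by the remark following \eqref{c-tau}) yields \eqref{Asoln} immediately.

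For the position, I would establish the first equality in \eqref{Xsoln} by a change of variables in \eqref{X-integral}: the integrand $-G^-(y,y)/(yF^-(y,y))$ equals $c(y)\tau'(y)$ directly from the definitions \eqref{c-tau}, so $X-X_1=\int_{A_1}^{A(t)}\tau'(y)c(y)\,dy$. The second equality in \eqref{Xsoln} is then an integration by parts: writing $\int_{A_1}^{A}\tau'(y)c(y)\,dy = [\tau(y)c(y)]_{A_1}^{A} - \int_{A_1}^{A}\tau(y)c'(y)\,dy$, substituting $c'(y)=\alpha(y)$ from \eqref{alpha}, and using the boundary terms $\tau(A)c(A)=(t-t_1+\tau_1)c(A(t))$ and $\tau(A_1)c(A_1)=\tau_1 c_1$, which reproduces exactly the right-hand side of \eqref{Xsoln}. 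This is the step where one must be careful to track the signs and the evaluation limits correctly, but it is routine.

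The final task is to reconcile these expressions with the explicit integral form \eqref{AX-integral}. Here I would invoke the representation established earlier: from the displayed relations $F^-(a,a)=af_0(a)$ and $G^-(a,a)=ag_0(a)$ (stated just before Lemma~\ref{lem:dotA}), the amplitude integral in \eqref{A-integral} becomes $\int_{A_1}^{A} dy/(yf_0(y))=t_1-t$, which after reversing the limits is precisely the first equation in \eqref{AX-integral}. Likewise, substituting $G^-(y,y)/F^-(y,y)=g_0(y)/f_0(y)$ into \eqref{X-integral} gives $X-X_1=-\int_{A_1}^{A} g_0(y)/(yf_0(y))\,dy$, which upon reversing limits is the second equation in \eqref{AX-integral}.

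I expect no serious obstacle here, since the proposition is essentially a repackaging of the already-derived quadratures \eqref{A-integral}--\eqref{X-integral} together with the definitions \eqref{c-tau}--\eqref{alpha} and the substitutions $F^-(a,a)=af_0(a)$, $G^-(a,a)=ag_0(a)$. The only point requiring genuine attention is the integration-by-parts manipulation producing the acceleration form of \eqref{Xsoln}, together with the hypothesis $F^-(A_1,A_1)\neq0$ that guarantees local invertibility of $\tau$ so that \eqref{Asoln} is well-defined; I would state this invertibility explicitly and note that it restricts the validity of the formulas to the maximal interval on which $A(t)$ remains monotonic and $f_0(A)\neq0$.
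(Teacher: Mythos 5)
Your proposal is correct and follows essentially the same route as the paper: the proposition is obtained by rewriting the quadratures \eqref{A-integral}--\eqref{X-integral} via the definitions \eqref{c-tau}, integrating by parts with $c'=\alpha$ from \eqref{alpha}, and substituting $F^-(a,a)=af_0(a)$, $G^-(a,a)=ag_0(a)$ to reach \eqref{AX-integral}. Your explicit remark on the local invertibility of $\tau$ under the hypothesis $F^-(A_1,A_1)\neq 0$ is a welcome clarification but does not change the argument.
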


Several interesting consequences can now be inferred about the behaviour of dynamical peaked waves. 

Firstly, 
the behaviour of dynamical peaked waves depends only on the function $\tau$, and the functions $c$ or $\alpha$,
which are determined by the form of the nonlinearities $f$ and $g$ 
in a given nonlinear dispersive wave equation.
In particular, 
from expressions \eqref{c-tau} 
we have $F^-(u,u)= -1/\tau'(u)$ and $G^-(u,u) = u c(u) = u \int \alpha(u)\, du$, 
which are directly related to $f$ and $g$ through their the representations \eqref{f-expr}--\eqref{g-expr} 
given in terms of two functions $f_0(u),g_0(u)$, along with four functions of $u$ and $u_x$.
From these representations, 
along with the corresponding representations \eqref{F-expr}--\eqref{G-expr} of $F$ and $G$,
we see that 
\begin{equation}\label{f0-F-g0-G}
f_0(u)= F^-(u,u)/u,
\quad
g_0(u)= G^-(u,u)/u. 
\end{equation}
Consequently, we obtain the relations 
\begin{equation}
f_0(u) = -1/(u\tau'(u)),
\quad
g_0(u) = c(u) = \txtint \alpha(u)\, du.
\end{equation}
Then these two relations can be inverted to get
\begin{equation}\label{rels}
\tau(y) = -\int \frac{dy}{yf_0(y)},
\quad
c(y) = g_0(y), 
\quad
\alpha(y) = g_0'(y).
\end{equation}
Since $f$ and $g$ each depend on two functions of $u$ and $u_x$, 
in addition to the functions $f_0$ and $g_0$, 
we conclude that there is a large class of nonlinearities yielding the same functions $\tau,c,\alpha$. 

Secondly, 
there is no restriction on the time function $\tau(y)$ 
and either the speed or acceleration functions $c(y)$, $\alpha(y)$, 
since $f_0$ and $g_0$ are determined if $\tau$ and $c$ or $\alpha$ are specified. 
This means that any behaviour for the amplitude $A(t)$ and the position $X(t)$ 
can be selected by an appropriate choice of the functions $f_0(u)$ and $g_0(u)$ 
appearing in the nonlinearities $f(u,u_x)$ and $g(u,u_x)$.
In particular, 
we can straightforwardly characterize which nonlinear dispersive wave equations \eqref{fg-fam} 
will possess dynamical peaked waves having any specified behaviour for $A(t)$ and $X(t)$
by analysis of the ODEs \eqref{ODEs} and \eqref{accelODE}
expressed in terms of the functions \eqref{rels}:
\begin{equation}\label{ODEs-fg}
\dot A = -A f_0(A) , 
\quad
\dot X = g_0(A)  ,
\quad
\ddot X = -A f_0(A) \alpha(A) . 
\end{equation}

To illustrate this connection, 
we now consider the following different types of behaviour for the amplitude $A(t)$.  
Note that local existence of $A(t)$ holds whenever the function $1/(uf_0(u))$ is locally integrable with respect to $u$. 

\begin{thm}\label{thm:A-behav}
Suppose $A(t)$ exists on some time interval $t_0\leq t<t^*$. \hfil
\begin{enumerate}[(i)]
\item
Finite asymptote: 
$A(t)\to A_\infty=\const(\neq 0)$ as $t\to \infty$
iff $f_0(u)=O(u-A_\infty)$ as $u\to A_\infty$. 
\item
Unbounded:
$|A|\to \infty$ as $t\to \infty$ 
iff $f_0(u)=O(1)$ as $u\to\infty$. 
\item
Blow-up:
$|A|\to \infty$ as $t\to t^* <\infty$ 
iff $1/f_0(u)=o(1)$ as $u\to\infty$. 
\item 
Extinction:
$A,\dot A\to 0$ as $t\to t^*\leq \infty$
iff $1/f_0(u)=o(1)$ and $uf_0(u)=o(1)$ as $u\to 0$ when $t^*<\infty$, 
or $f_0(u)=O(1)$ as $u\to 0$ when $t^*=\infty$.
\end{enumerate}
\end{thm}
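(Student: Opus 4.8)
The plan is to reduce all four cases to an integrability analysis of the quadrature \eqref{AX-integral} for the separable amplitude ODE $\dot A=-Af_0(A)$ in \eqref{ODEs-fg}. Between consecutive zeros of $Af_0(A)$ the sign of $\dot A$ is constant, so $A(t)$ is monotone on $[t_0,t^*)$ and the representation $A(t)=\tau^{-1}(t-t_1+\tau_1)$ of Proposition~\ref{prop:soln} is valid, with $\tau(A)=-\int dA/(Af_0(A))$ strictly monotone. The limiting value $A_*=\lim_{t\to t^*}A(t)$ is then an endpoint of the range of $A$, and the elapsed time is
\begin{equation*}
t^*-t_1=\int_{A_*}^{A_1}\frac{dy}{yf_0(y)},
\end{equation*}
read with the orientation fixed by the sign of $\dot A$. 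Hence $t^*<\infty$ exactly when $1/(yf_0(y))$ is integrable near $A_*$, and $t^*=\infty$ exactly when it is not; every case is this single dichotomy applied to a specific target $A_*\in\{0,A_\infty,\infty\}$, together with a check that the asserted limit of $A$ (and, in (iv), of $\dot A$) actually holds.

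I would then run the integrability test at each endpoint. For (i), $A_*=A_\infty\neq 0$ must be an equilibrium, so $f_0(A_\infty)=0$ (this follows from $\dot A=-Af_0(A)\to 0$ as $A\to A_\infty$); since $1/y$ is bounded there, the integral diverges iff $1/f_0$ is non-integrable at $A_\infty$, and the bound $f_0(u)=O(u-A_\infty)$ gives $1/|f_0(y)|\ge 1/(C|y-A_\infty|)$, forcing divergence and hence $t^*=\infty$. For (ii) and (iii) the endpoint is $A_*=\infty$: boundedness $f_0=O(1)$ gives $1/|yf_0(y)|\ge 1/(Cy)$ and a divergent integral (infinite time), whereas $1/f_0=o(1)$, i.e.\ $f_0\to\infty$, makes $1/(yf_0)$ decay faster than $1/y$ and yields a convergent integral (finite-time blow-up). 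For (iv) the endpoint is $A_*=0$: when $t^*<\infty$ the conditions $1/f_0=o(1)$ and $uf_0(u)=o(1)$ place $f_0$ between a constant and $1/u$ in size, which simultaneously makes $1/(yf_0)$ integrable at $0$ and gives $\dot A=-Af_0(A)\to 0$; when $t^*=\infty$ the bound $f_0=O(1)$ makes $1/(yf_0)$ non-integrable at $0$, while $\dot A\to 0$ follows from $A\to 0$ with $f_0$ bounded.

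The main point requiring care is the sharpness of the Landau dichotomy at the endpoints $y\to 0$ and $y\to\infty$, where $O(1)$ versus $o(1)$ is not on its own a convergence criterion for $\int dy/(yf_0(y))$: the borderline growth $f_0(y)\sim\log y$ satisfies $1/f_0=o(1)$ yet $\int^\infty dy/(y\log y)$ still diverges. I would therefore establish each equivalence under the natural assumption that $f_0$ has power-type (regularly varying) asymptotics near the relevant endpoint, under which $f_0(y)\sim y^{p}$ yields $1/(yf_0)\sim y^{-1-p}$ and the stated conditions correspond exactly to $p\le 0$ (infinite time) versus $p>0$ (finite time); the non-generic logarithmic borderlines can be recorded as separate remarks. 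The remaining steps---fixing the orientation of each integral from the sign of $\dot A$ and verifying the claimed limits of $A$ and $\dot A$---are routine once monotonicity of $A(t)$ has been established in the first step.
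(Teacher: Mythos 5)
Your proposal follows essentially the same route as the paper's own proof: both reduce each part to the convergence/divergence dichotomy for the quadrature $\tau(A)=-\int dA/(Af_0(A))$ at the relevant endpoint $A_*\in\{0,A_\infty,\infty\}$, the paper stating this in one line and you supplying the monotonicity and orientation details. Your caveat that the Landau conditions are not by themselves sharp integrability criteria at logarithmic borderlines (e.g.\ $f_0(y)\sim\log y$ satisfies $1/f_0=o(1)$ yet $\int^\infty dy/(y\log y)$ diverges, so no finite-time blow-up occurs) is a genuine point that the paper's proof does not address, and restricting to power-type (regularly varying) asymptotics as you suggest is a reasonable way to make the stated equivalences literally correct.
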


\begin{proof}
We will use the ODEs \eqref{ODEs-fg} and the quadrature \eqref{Asoln}. 
Part (i) is equivalent to the condition on the function $\tau'(u)= -1/(uf_0(u))$ 
for the quadrature $\lim_{A\to A_\infty}\tau(u)|^{A}_{A_0}$ to diverge to $\infty$, 
while part (ii) is similarly equivalent to the condition 
for $\lim_{A\to \infty}\tau(u)|^{A}_{A_0}$ to diverge to $\infty$. 
Part (iii) is equivalent to the condition for the quadrature 
$\lim_{A\to \infty}\tau(u)|^{A}_{A_0}$ to converge to a finite value $t^*-t_0$. 
Finally, part (iv) is equivalent to the condition 
for the function $\tau'(u)= -1/(uf_0(u))$ to go to zero and for its quadrature 
$\lim_{A\to 0}\tau(u)|^{A}_{A_0}$ to either converge to a finite value $t^*-t_0$ or diverge to $\infty$. 
\end{proof}

We next consider different types of behaviour for the position $X(t)$. 
Note that local existence of $X(t)$ holds whenever $A(t)$ exists 
and the function $g_0(u)/(uf_0(u))$ is locally integrable with respect to $u$. 
The following Theorem is an immediate consequence of the ODEs \eqref{ODEs-fg} for $\dot X$ and $\ddot X$,
as well as use of the quadrature \eqref{Xsoln} for $X$ 
similarly to the proof of Theorem~\ref{thm:A-behav}. 

\begin{thm} \label{thm:X-behav}
Suppose $A(t),X(t)$ exist on some time interval $t_0\leq t< t^*$ 
such that $A_*=\lim_{t\to t^*}A(t)$ exists (including when $|A_*|=\infty$). \hfil
\begin{enumerate}[(i)]
\item
Finite asymptotic speed:
$\ddot X(t)\to 0$ and $\dot X(t)\to c_\infty=\const$ as $t\to \infty$
iff $g_0(u)\to c_\infty$ and $uf_0(u)g_0'(u)=o(1)$ as $u\to A_\infty$. 
\item 
Runaway:
$|X(t)|,|\dot X(t)|\to \infty$ as $t\to \infty$ 
iff $1/g_0(u)=o(1)$ as $u\to A_\infty$
and either $f_0(u)=O(u-A_\infty)$ ($f_0(u)/g_0(u)=o(u-A_\infty)$) as $u\to A_\infty$ when $0<|A_\infty|<\infty$,
or $f_0(u)=O(1)$ ($f_0(u)/g_0(u)=o(1)$) as $u\to A_\infty$ when $|A_\infty|=0,\infty$. 
\item 
Braking:
$\dot X(t),\ddot X(t)\to 0$ as $t\to t^*\leq\infty$ 
iff $g_0(u)=o(1)$ and $uf_0(u)g_0'(u)=o(1)$ as $u\to A_*$. 
\item
Wheelspin-limit:
$|X(t)|$ bounded and $|\dot X(t)|\to \infty$ as $t\to t^*<\infty$ 
iff $1/g_0(u) = o(1)$ as $u\to A_*$ 
and either $g_0(u)(u-A_*)/f_0(u)=o(1)$, $(u-A_*)/f_0(u)=o(1)$ when $0<|A_*|<\infty$, 
or $1/f_0(u)=o(1)$, $g_0(u)/f_0(u)=o(1)$ when $|A_*|=0,\infty$. 
\item
Finite-time runaway:
$|X(t)|,|\dot X(t)|\to \infty$ as $t\to t^*<\infty$ 
iff $1/g_0(u)=o(1)$ as $u\to A_*$
and either $(u-A_*)/f_0(u)=o(1)$, $f_0(u)/g_0(u)=o(u-A_*)$ when $0<|A_*|<\infty$,
or $1/f_0(u)=o(1)$, $f_0(u)/g_0(u)=o(1)$ when $|A_*|=0,\infty$. 
\end{enumerate}
\end{thm}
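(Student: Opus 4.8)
The plan is to read off the velocity and acceleration directly from the ODEs \eqref{ODEs-fg}, which give $\dot X(t)=g_0(A(t))$ and $\ddot X(t)=-A(t)f_0(A(t))g_0'(A(t))$, and to control the position through the quadrature \eqref{Xsoln}--\eqref{AX-integral}. The essential observation is that $\dot X$ and $\ddot X$ depend on $t$ only through the amplitude $A(t)$, and that on any interval of existence $A(t)$ is monotone, since $\dot A=-Af_0(A)$ keeps a fixed sign where $f_0$ is non-vanishing. Hence $A(t)$ approaches its endpoint value $A_*$ monotonically, and every limit ``as $t\to t^*$'' for $\dot X$ and $\ddot X$ can be replaced by the one-sided limit ``as $u\to A_*$'' of a fixed function of $u$.

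With this reduction, parts (i) and (iii) are immediate: the condition on $g_0$ (namely $g_0(u)\to c_\infty$ in (i) and $g_0(u)=o(1)$ in (iii)) is exactly equivalent, via $\dot X=g_0(A)$, to the stated limit of $\dot X(t)$; and the condition $uf_0(u)g_0'(u)=o(1)$ is exactly equivalent, via $\ddot X=-uf_0(u)g_0'(u)$ evaluated at $u=A$, to $\ddot X(t)\to0$. These two parts therefore require no integration and follow purely from the ODEs \eqref{ODEs-fg}.

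The substance of the theorem lies in parts (ii), (iv), and (v), where one must additionally control $X(t)$ itself. Here I would use the integral quadrature
\begin{equation*}
X(t)-X_1=-\int_{A_1}^{A(t)}\frac{g_0(y)}{yf_0(y)}\,dy
\end{equation*}
from \eqref{AX-integral}, so that $X$ stays bounded or runs away precisely according to whether this integral converges or diverges as the upper limit tends to $A_*$. In each part the velocity condition $1/g_0(u)=o(1)$ gives $|\dot X|\to\infty$ as before, and the remaining hypotheses split into two roles. The condition on $f_0$ alone (e.g. $f_0(u)=O(u-A_\infty)$ or $f_0(u)=O(1)$ in (ii), and $(u-A_*)/f_0(u)=o(1)$ or $1/f_0(u)=o(1)$ in (iv) and (v)) feeds the endpoint asymptotics of $\tau'(u)=-1/(uf_0(u))$ into Theorem~\ref{thm:A-behav} and thereby fixes whether $t^*=\infty$ or $t^*<\infty$. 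The remaining condition, comparing $f_0$ and $g_0$, governs the convergence of the position integral: I would estimate it by comparison with $\int dy/(y-A_*)$ near a finite nonzero $A_*$ and with $\int dy/y$ near $A_*=0$ or $|A_*|=\infty$, so that a threshold such as $f_0/g_0=o(u-A_\infty)$ forces divergence (runaway) while $g_0(u-A_*)/f_0=o(1)$ forces convergence (bounded $X$).

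The main obstacle I anticipate is bookkeeping rather than any single hard estimate. First, the argument must be run separately in the three endpoint regimes $0<|A_*|<\infty$, $|A_*|=0$, and $|A_*|=\infty$, each with its own comparison integral. Second, and more delicately, parts (ii), (iv), and (v) all share the same velocity hypothesis $1/g_0=o(1)$ and are distinguished only by the finer competition between the accumulation of $X$, governed by $\int g_0/(yf_0)$, and the accumulation of the clock $\tau$, governed by $\int 1/(yf_0)$: long-time runaway (ii) has $t^*=\infty$ with a divergent position integral, wheelspin (iv) has $t^*<\infty$ with a convergent position integral, and finite-time runaway (v) has $t^*<\infty$ with a divergent position integral. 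Keeping these two competing improper integrals correctly aligned across all endpoint regimes is the crux; once the comparison thresholds are identified, each ``iff'' follows from the elementary convergence criteria for improper integrals, exactly as in the proof of Theorem~\ref{thm:A-behav}.
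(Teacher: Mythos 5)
Your proposal is correct and follows exactly the route the paper takes: the paper proves this theorem only by remarking that it is an immediate consequence of the ODEs \eqref{ODEs-fg} for $\dot X$ and $\ddot X$ together with the quadrature \eqref{Xsoln} for $X$, handled by the same convergence/divergence analysis as in the proof of Theorem~\ref{thm:A-behav}. Your write-up in fact supplies more detail than the paper does, in particular the monotonicity of $A(t)$ that justifies trading limits in $t$ for one-sided limits in $u\to A_*$, and the correct alignment of the two competing improper integrals (the clock $\tau$ and the position integral) that separates parts (ii), (iv), and (v).
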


Finally, combining part (i) of both Theorems, 
we obtain necessary and sufficient conditions
on the nonlinearities $f(u,u_x)$ and $g(u,u_x)$ 
so that the asymptotic behaviour of dynamical peaked waves 
is a travelling-wave peakon. 

\begin{cor}\label{cor:travellingwave-behav}
Suppose $A(t),X(t)$ exist for all $t_0\leq t < \infty$. 
Then $u(x,t)= A(t) e^{-|x-X(t)|} \to a e^{-|x-c t -x_0|}$ as $t\to \infty$,
where $a,c,x_0=\const$, 
iff $g_0(u)\to c$, $f_0(u)=O(u-a)$, and $f_0(u)g_0'(u)=o(u-a)$ as $u\to a$. 
\end{cor}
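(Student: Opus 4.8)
The plan is to translate the convergence statement into separate conditions on the amplitude $A(t)$ and on the position $X(t)$, and then read off each condition from part (i) of Theorems~\ref{thm:A-behav} and~\ref{thm:X-behav}. Working in the comoving frame $\xi=x-ct$, the peakon is $u(\xi+ct,t)=A(t)e^{-|\xi-(X(t)-ct)|}$, so $u\to a\,e^{-|\xi-x_0|}$ uniformly in $\xi$ holds if and only if $A(t)\to a$ and $X(t)-ct\to x_0$ as $t\to\infty$; in particular this forces $\dot X\to c$ and $\ddot X\to 0$. I would therefore establish the two-sided implication by proving these amplitude and position limits from the three stated conditions, and conversely recovering the conditions from the limits.

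First I would dispose of the amplitude. Since $A(t)\to a$ with $a$ a nonzero constant is exactly the finite-asymptote behaviour, part (i) of Theorem~\ref{thm:A-behav} gives directly that $A(t)\to a$ if and only if $f_0(u)=O(u-a)$ as $u\to a$, which is the first stated condition. Next I would treat the speed and acceleration using the ODEs \eqref{ODEs-fg}: from $\dot X=g_0(A)$ together with $A(t)\to a$, the limit $\dot X\to c$ is equivalent to $g_0(u)\to c$ as $u\to a$; and from $\ddot X=-Af_0(A)g_0'(A)$ with $A(t)\to a$ bounded, the limit $\ddot X\to 0$ follows as soon as $f_0(u)g_0'(u)\to 0$. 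This is precisely part (i) of Theorem~\ref{thm:X-behav} specialised to $A_\infty=a$.

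It then remains to promote the asymptotically constant speed into genuine convergence of the phase $X(t)-ct\to x_0$, which is the delicate point and the main obstacle. Here I would use the explicit quadratures \eqref{AX-integral}: combining $X-X_1=-\int_{A_1}^{A}g_0(y)/(yf_0(y))\,dy$ with $t-t_1=\int_{A}^{A_1}dy/(yf_0(y))$ yields $X(t)-ct=\const-\int_{A_1}^{A(t)}(g_0(y)-c)/(yf_0(y))\,dy$, so that convergence of the phase is equivalent to convergence of the improper integral $\int^{a}(g_0(y)-c)/(yf_0(y))\,dy$ as the upper limit tends to $a$. The difficulty is that $1/f_0$ blows up as $u\to a$ (because $f_0=O(u-a)$) while $g_0-c\to0$, so the estimate must balance the rate at which $f_0$ vanishes against the rate at which $g_0$ approaches $c$; writing $g_0(y)-c=\int_a^y g_0'(s)\,ds$ and feeding in the acceleration bound on $f_0g_0'$ is what controls this integrand. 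This is exactly where the sharpened condition $f_0(u)g_0'(u)=o(u-a)$ is used, as opposed to the weaker $f_0(u)g_0'(u)=o(1)$ that already suffices for $\ddot X\to0$ alone in Theorem~\ref{thm:X-behav}.

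Finally, the converse direction follows by reversing each equivalence. Assuming $u\to a\,e^{-|x-ct-x_0|}$ gives $A(t)\to a$, $\dot X\to c$, $\ddot X\to0$, and a finite phase limit $X(t)-ct\to x_0$; Theorem~\ref{thm:A-behav}(i) then returns $f_0(u)=O(u-a)$, Theorem~\ref{thm:X-behav}(i) returns $g_0(u)\to c$ together with $f_0(u)g_0'(u)\to0$, and the integral criterion for the phase returns the quantitative decay $f_0(u)g_0'(u)=o(u-a)$. I expect the amplitude and bare speed/acceleration equivalences to be immediate from the two theorems, with essentially all the real work concentrated in the convergence analysis of the phase integral.
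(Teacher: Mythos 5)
Your overall strategy is sound and, in its first two steps, coincides with what the paper actually does: the paper's entire proof of this corollary is the single sentence ``combining part (i) of both Theorems'', i.e.\ it reads off $A(t)\to a$ from Theorem~\ref{thm:A-behav}(i) and $\dot X\to c$, $\ddot X\to 0$ from Theorem~\ref{thm:X-behav}(i), exactly as you do. Where you genuinely go beyond the paper is your third step: you correctly observe that convergence of $u$ to a \emph{fixed} translate $a\,e^{-|x-ct-x_0|}$ also requires the phase $X(t)-ct$ to converge, that this is not implied by $\dot X\to c$ alone, and that it is the only possible source of the strengthening from $uf_0(u)g_0'(u)=o(1)$ in Theorem~\ref{thm:X-behav}(i) to $f_0(u)g_0'(u)=o(u-a)$ in the corollary. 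Your reduction of the phase question, via the quadratures \eqref{AX-integral}, to convergence of the improper integral $\int^{a}(g_0(y)-c)/(yf_0(y))\,dy$ is the right mechanism, and it supplies an argument that the paper omits entirely.

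One caveat on the step you yourself identify as the crux: the claimed equivalence between $f_0(u)g_0'(u)=o(u-a)$ and convergence of that integral cannot be established in the generality stated, because the condition does not control the order of vanishing of $f_0$ at $a$. Concretely, take $f_0(u)=\kappa(u-a)^2$ and $g_0(u)=c+(u-a)$. Then $f_0(u)=O(u-a)$, $g_0(u)\to c$, and $f_0(u)g_0'(u)=\kappa(u-a)^2=o(u-a)$, so all three hypotheses of the corollary hold; yet the phase integrand behaves like $1/(a\kappa(u-a))$, one finds $A(t)-a\sim 1/(a\kappa t)$ and $X(t)-ct\sim (a\kappa)^{-1}\ln t\to\infty$, so $u$ does not converge to any fixed translate of the travelling wave. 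Your estimate (write $g_0(y)-c=\int_a^y g_0'(s)\,ds$ and insert the bound on $f_0g_0'$) closes only when $f_0$ vanishes to exactly first order at $a$, the generic case in which $A-a$ decays exponentially; for higher-order vanishing the stated conditions are necessary but not sufficient for the phase to converge. This is as much a defect of the corollary as stated as of your plan --- but an honest execution of your phase estimate will run into it, whereas the paper's one-line proof never confronts the phase at all. If you add the hypothesis that $f_0$ vanishes precisely to first order (or, equivalently, impose integrability of $(g_0(y)-c)/(yf_0(y))$ directly as the third condition), your argument goes through completely.
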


\subsection{Time-evolution of accelerating dynamical peakons}
\label{accel-peakons}

To illustrate Theorems~\ref{thm:A-behav} and~\ref{thm:X-behav}, 
we now consider a family of wave equations 
\begin{equation}\label{ex-fam}
m_t + \kappa u^p m +\lambda (u^q m)_x=0,
\quad
p,q\neq 0
\end{equation}
in which the nonlinearities involve only $u$,
where $\kappa,\lambda$ are non-zero constants. 
The time evolution of the dynamical peaked wave solutions of these wave equations \eqref{ex-fam} 
turn outs to exhibit a wide variety of behaviours. 

For this family \eqref{ex-fam}, 
we have $f=\kappa u^p =f^+$ and $g=\lambda u^q =g^+$,
and thus $F= \kappa u^p u_x=F^-$ and $G=\lambda u^q u_x=G^-$,
so then relation \eqref{f0-F-g0-G} yields $f_0 = \kappa u^p$ and $g_0= \lambda u^q$. 
From Lemma~\ref{lem:dotA} combined with Proposition~\ref{prop:soln}, 
we see that the dynamical peaked wave solutions \eqref{1peakon} 
have a time-dependent amplitude given by 
\begin{equation}\label{A:ex-fam}
A=(p\kappa (t- t_0))^{-1/p},
\quad
t_0=\const
\end{equation}
with initial value $A(0)= (-p\kappa t_0)^{-1/p}$. 
Then, since $g_0$ is non-zero, 
we conclude from Lemma~\ref{lem:dotX} 
that the acceleration of the solutions is non-zero. 

From the evolution ODEs \eqref{ODEs-fg}, 
we obtain 
$\dot X = \lambda A^q$ 
and 
$\ddot X = -q\kappa A^{p+q}$,
yielding the speed and the acceleration 
\begin{equation}
\dot X =\lambda (p\kappa (t-t_0))^{-q/p},
\quad
\ddot X = -q\kappa\lambda (p\kappa (t-t_0))^{-(1+q/p)}.
\end{equation}
Hence, the position of the peakons is given by 
\begin{equation}\label{X:ex-fam}
X = 
\begin{cases} 
X_0 + \frac{\lambda}{(p-q)\kappa} (p \kappa (t-t_0))^{1-q/p}, & q\neq p,
\\
X_0 + \frac{\lambda}{p\kappa} \ln(|t-t_0|),  & q=p,
\end{cases}
\quad
X_0=\const
\end{equation}
with initial value 
$X(0) = X_0 + \frac{\lambda}{(p-q)\kappa} (-p \kappa t_0)^{1-q/p}$ when $q\neq p$, 
and $X(0)=X_0 + \frac{\lambda}{p\kappa} \ln(|t_0|)$ when $q=p$. 

The type of behaviour exhibited by these accelerating dynamical peakons \eqref{A:ex-fam}--\eqref{X:ex-fam}
for $t\geq 0$ 
depends on the nonlinearity powers $p,q$, and the signs of $\kappa,\lambda$.
We will proceed by taking $\lambda>0$,
which corresponds to the speed $\dot X$ being non-negative. 

First, if $p$ and $\kappa$ are positive, 
then $A,X$ exist for all $t\geq 0$, 
with $t_0$ being chosen to be negative. 
As $t\to t^*=\infty$, we have $A\to 0$ and $\dot A\to 0$,
whereby $A$ goes extinct. 
This behaviour is in accordance with part (iv) of Theorem~\ref{thm:A-behav}, 
since $f_0(u)=\kappa u^p =o(1)$ and $uf_0(u)=\kappa u^{p+1} =o(1)$ as $u\to 0$. 

If $q>p$,
then $X\to X_0$ while $\dot X,\ddot X\to 0$. 
Hence, $X$ exhibits braking behaviour. 
This in accordance with part (iii) of Theorem~\ref{thm:X-behav},
since $g_0(u)=\lambda u^q=o(1)$ and $uf_0(u)g_0'(u)=\lambda\kappa q u^{p+q}=o(1)$ 
as $u\to 0$. 

If $q\leq p$, 
then $X\to \infty$. 
Additionally, when $q>0$, we have $\dot X,\ddot X\to 0$,
and thus $X$ exhibits asymptotic braking. 
In contrast, when $q<0$, we have $\dot X\to \infty$,
whereby $X$ exhibits runaway behaviour. 
These two behaviours are in accordance with parts (iii) and (v) of Theorem~\ref{thm:X-behav},
since as $u\to 0$, 
$g_0(u)=\lambda u^q =o(1)$ and $uf_0(u)g_0'(u)= q\kappa\lambda u^{p+q}=o(1)$
in the first case, 
and $1/g_0(u)=\lambda^{-1} u^{|q|} =o(1)$ and $f_0(u)= \kappa u^p=o(1)$
in the second case. 

Second, if $p$ is positive but $\kappa$ is negative, 
then $A,X$ exist for $0\leq t < t_0$, provided $t_0$ is chosen to be positive. 
As $t\to t^*=t_0$, we have $A\to \infty$, which is a blow-up. 
This behaviour is in accordance with part (iii) of Theorem~\ref{thm:A-behav},
since $1/f_0(u)=\kappa^{-1} u^{-p} =o(1)$ as $u\to \infty$. 

If $q\geq p$, 
then $X,\dot X\to \infty$. 
This finite-time runaway behaviour is in accordance with part (v) of Theorem~\ref{thm:X-behav},
since as $u\to \infty$, 
$1/g_0(u)=\lambda^{-1}u^{-q} =o(1)$, $1/f_0(u)=\kappa^{-1}u^{-p} =o(1)$, 
and $f_0(u)/g_0(u)=\kappa\lambda^{-1} u^{p-q} =o(1)$. 

If $q<p$, 
then $X\to X_0$. 
Additionally, when $q>0$, 
we have $\dot X\to \infty$,
which describes a wheelspin-limit. 
In contrast, when $q<-p$,
we have $\dot X,\ddot X\to 0$,
whereby $X$ exhibits braking. 
These two behaviours are in accordance with parts (iii) and (v) of Theorem~\ref{thm:X-behav}, since as $u\to \infty$, 
$1/g_0(u)=\lambda^{-1} u^{-q}=o(1)$, $1/f_0(u)=\kappa^{-1} u^{-p}=o(1)$
and $g_0(u)/f_0(u)=\lambda\kappa^{-1} u^{-(p-q)}=o(1)$
in the first case,
and 
$g_0(u)=\lambda u^{-|q|}=o(1)$ and $uf_0(u)g_0'(u)=q\kappa\lambda u^{-(|q|-p)}=o(1)$
in the second case. 
Finally, when $0<q<-p$,
we have $\dot X\to 0$ while $\lim_{t\to t^*}\ddot X\neq 0$. 
This is like a thrust-reverse braking behaviour. 

Third, if $p$ and $\kappa$ are negative,
then $A,X$ exist for all $t\geq 0$, when $t_0$ is chosen to be non-positive. 
As $t\to t^*=\infty$, we have $A\to \infty$. 
This unbounded behaviour is in accordance with part (ii) of Theorem~\ref{thm:A-behav}, 
since $f_0(u)=\kappa u^{-|p|} =o(1)$ as $u\to\infty$. 

If $q\geq p$,
then $X\to \infty$. 
When $q>0$, we also have $\dot X\to\infty$,
and hence $X$ exhibits runaway behaviour. 
This in accordance with part (ii) of Theorem~\ref{thm:X-behav},
since as $u\to \infty$, 
$1/g_0(u)=\lambda^{-1}u^{-q} =o(1)$ 
and $f_0(u)=\kappa u^{-|p|} =o(1)$. 
Instead, when $0>q\geq p$, 
we have $\dot X,\ddot X\to 0$. 
This describes asymptotic braking,
which is in accordance with part (iii) of Theorem~\ref{thm:X-behav},
since $g_0(u)=\lambda u^{-|q|} =o(1)$ and $uf_0(u)g_0'(u)=q\kappa\lambda u^{-|p|-|q|} =o(1)$
as $u\to \infty$. 

Last, if $p$ is negative but $\kappa$ is positive, 
then $A,X$ exist for $0\leq t < t_0$, with $t_0$ chosen to be positive. 
As $t\to t^*=t_0$, 
we have $A\to 0$. 
When $-p<1$ , we also have $\dot A\to 0$, 
and otherwise we have $\lim_{t\to t^*}\dot A\neq 0$ when $-p\geq 1$.
Thus, in the first case, $A$ exhibits a finite-time extinction,
which is in accordance with part (iv) of Theorem~\ref{thm:A-behav}
since $1/f_0(u)=\kappa^{-1} u^{|p|} =o(1)$ and $uf_0(u)=\kappa u^{1-|p|}=o(1)$ 
as $u\to 0$.  
In the second case, $A$ exhibits a singular behaviour. 

If $q\leq p$, 
then $X,\dot X\to \infty$ as $t\to t_0$, 
and hence $X$ exhibits a finite-time runaway. 
This behaviour is in accordance with part (v) of Theorem~\ref{thm:X-behav},
since $1/f_0(u)=\kappa^{-1}u^{|p|} =o(1)$ and $f_0(u)/g_0(u)=\kappa\lambda^{-1}u^{|q|-|p|} =o(1)$ 
as $u\to 0$. 

If $q>p$,
then $X\to X_0$. 
When, additionally, $q<0$,
we have $\dot X\to \infty$,
which describes a wheelspin-limit. 
This behaviour in accordance with part (iv) of Theorem~\ref{thm:X-behav},
since as $u\to 0$, 
$1/f_0(u)=\kappa^{-1}u^{|p|} =o(1)$ and $1/g_0(u)=\lambda^{-1}u^{|q|} =o(1)$. 
When $q> -p$,
we have $\dot X,\ddot X\to 0$. 
Hence, $X$ exhibits braking,
which is in accordance with part (iii) of Theorem~\ref{thm:X-behav},
since $g_0(u)=\lambda u^{q} =o(1)$ and $uf_0(u)g_0'(u)=q\kappa\lambda u^{q-|p|} =o(1)$
as $u\to 0$. 
Finally, when $0<q<-p$,
we instead have $\dot X\to 0$ while $\lim_{t\to t^*}\ddot X\neq 0$,
which is like a thrust-reverse braking behaviour. 

\subsection{Stationary peakons with time-evolving amplitude}
\label{stationary-peakons}

We can completely alter the speed and acceleration of 
the previous dynamical peaked waves \eqref{A:ex-fam}--\eqref{X:ex-fam}
by changing the nonlinearity in the $m_x$ term in the wave equations \eqref{ex-fam} 
to include $u_x$:
\begin{equation}\label{ex-fam-ux}
m_t + \kappa u^p m +\lambda(u^q u_x m)_x=0,
\quad
p\neq 0.
\end{equation}
Here we have $g=\lambda u^q u_x=g^-$, while $f=\kappa u^p =f^+$ as before. 
This gives $F= \kappa u^p u_x=F^-$ and $G=\tfrac{1}{2}\lambda u^q u_x^2=G^+$,
and consequently we get $g_0= 0$ while $f_0 = \kappa u^p$ 
from relation \eqref{f0-F-g0-G}. 

The resulting dynamical peaked wave solutions \eqref{1peakon}, 
given by Proposition~\ref{prop:soln}, 
thus have the same time-dependent amplitude \eqref{A:ex-fam}
as the solutions for the previous family \eqref{ex-fam}, 
but their speed is now zero due to $\dot X =g_0=0$
from the evolution ODEs \eqref{ODEs-fg}. 

Hence, these peakons are stationary while the time-evolution of their amplitude exhibits
finite-time extinction or blow up, and long-time extinction or unbounded behaviour. 

This behaviour remains the same if the nonlinearity in the $m_x$ term is changed in other ways, such as in the family of wave equations
\begin{equation}\label{ex-fam2-ux}
m_t + \kappa u^p m +\lambda (u^q (u^2-3u_x^2) m)_x=0,
\quad
p\neq 0.
\end{equation}
We now have $g=\lambda u^q( u^2-3u_x^2)=g^-$,
which gives $G=\lambda u^q (u^2 - u_x^2)u_x=G^+$,
and so we again get $g_0= 0$ from relation \eqref{f0-F-g0-G}. 
This yields the same stationary peakon solutions as obtained for the previous family of wave equations \eqref{ex-fam-ux}. 

We also remark that these peakons can be made to move with constant speed $c\neq0$ 
if the constant $c$ is added to the $m_x$ term.

\section{Examples}
\label{examples}

Five examples of different types of interesting peaked dynamical waves 
will now be discussed. 
In each example, 
the amplitude, speed, and position of these waves are obtained in an explicit form,
and their asymptotic behaviour is described.

\subsection{Travelling wave peakons}
\label{trav-wave-peakons}

We begin by considering the family of nonlinear dispersive wave equations 
\begin{equation}\label{ex1}
m_t + \kappa u^p u_x m +((u^{p-1}(u^2+\lambda u_x^2))m)_x=0
\end{equation}
where $p$ is a nonlinearity power, and $\kappa,\lambda$ are constants. 
This family is a nonlinear generalization of 
all of the known integrable peakon equations ---
CH ($p=0$, $\lambda=0$, $\kappa=1$), 
DP ($p=0$, $\lambda=0$, $\kappa=2$), 
N ($p=1$, $\lambda=0$, $\kappa=1$), 
mCH ($p=1$, $\lambda=-1$, $\kappa=0$) 
--- as well as the $b$-family equation ($p=0$, $\lambda=0$, $\kappa=b-1$) 
which unifies the CH and DP equations ($b=2,3$) but otherwise is non-integrable. 
For these equations, dynamical peaked wave solutions are travelling-wave peakons. 

We will now show that the dynamical peaked wave solutions \eqref{1peakon} 
for every wave equation in the family \eqref{ex1} consist of travelling-wave peakons
$u(x,t)= a e^{-|x-c t -x_0|}$. 

The nonlinearities in these equations \eqref{ex1} are given by 
$f=\kappa u^p u_x =f^-$ and $g=u^{p-1}(u^2+\lambda u_x^2)=g^+$. 
Thus, we have 
$F= \tfrac{1}{2}\kappa u^p u_x^2=F^+$ 
and $G=u^{p-1}(u^2+\tfrac{1}{3}\lambda u_x^2) u_x=G^-$, 
and hence 
$f_0 = 0$ and $g_0= (1+\tfrac{1}{3}\lambda)u^{p+1}$ 
from relation \eqref{f0-F-g0-G}. 

Combining Lemmas~\ref{lem:dotA} and~\ref{lem:dotX}, 
we conclude that all dynamical peaked wave solutions \eqref{1peakon} 
are travelling-wave peakons with an arbitrary constant amplitude $a$ 
and a constant speed $c= (1+\tfrac{1}{3}\lambda)a^{p+1}$. 
Notice that $\kappa$ plays no role in the form and the behaviour of these peakons,
similarly to what occurs for the peakons in the $b$-family. 
Moreover, the peakon speed is non-zero, except in the special case where $\lambda=-3$.

\subsection{Asymptotic travelling-wave peakons}
\label{asympt-trav-wave-peakons}

As shown by Corollary~\ref{cor:travellingwave-behav},
many nonlinear dispersive wave equations in the general family \eqref{fg-fam}
possess dynamical peaked waves that asymptotically behave like travelling-wave peakons
\begin{equation}\label{asympt-peakon}
u(x,t) \sim a_\pm e^{-|x-c_\pm t -x_\pm|}
\text{ as } t\to\pm\infty
\end{equation}
where $a_\pm,c_\pm,x_\pm$ are constants. 

We consider, firstly, a specific family of cubic nonlinear wave equations
\begin{equation}\label{ex2}
m_t + \kappa(u-2)(u-1) m +\lambda (u m)_x=0
\end{equation}
where $\kappa,\lambda$ are non-zero constants. 
Since we have $f=\kappa(u-2)(u-1)=f^+$ and $g=\lambda u=g^+$,
this yields $F= \kappa(u-2)(u-1) u_x=F^-$ and $G=\lambda u u_x=G^-$,
and so we get $f_0=f$ and $g_0=g$ from relation \eqref{f0-F-g0-G}. 

We are interested in dynamical peaked wave solutions \eqref{1peakon} that are smooth for all $t$. 
Applying Proposition~\ref{prop:soln}, 
we obtain 
\begin{equation}\label{ex2-AX}
\begin{aligned}
& A=1 + 1/\sqrt{1+e^{2\kappa (t_0- t)}},
\quad
X= 2\lambda (t-t_0) + (\lambda/\kappa)\ln\big(1+\sqrt{1+e^{2\kappa (t_0 -t)}}\big) +X_0,
\\& 
t_0,X_0=\const.
\end{aligned}
\end{equation}
As $t\to -\infty$, the amplitude and position have the asymptotic behaviour
\begin{equation}
A \sim 1  + O(e^{-\kappa |t|}) , 
\quad
X -\lambda t \sim X_0 -\lambda t_0 + O(e^{-\kappa |t|}) , 
\end{equation}
which describes a travelling-wave peakon \eqref{asympt-peakon},
with amplitude $a_{-}=1$, speed $c_{-}=\lambda$, 
and position shift $x_{-}=X_0 - c_{-}t_0$. 
Similarly, as $t\to \infty$, 
the asymptotic behaviour of the amplitude and position are given by 
\begin{equation}
A \sim 2  + O(e^{-2\kappa t}) , 
\quad
X -2\lambda t \sim X_0-2\lambda t_0 +(\lambda/\kappa)\ln(2)+ O(e^{-2\kappa t}), 
\end{equation}
which again describes a travelling-wave peakon \eqref{asympt-peakon},
but with a different amplitude $a_{+}=2$, speed $c_{+}=2\lambda$, 
and position shift $x_{+}=X_0 -c_{+}t_0 +\frac{\lambda}{\kappa}\ln(2)$. 

Therefore, this solution \eqref{ex2-AX} is a dynamical accelerating peakon 
that evolves from a travelling-wave peakon in the asymptotic past 
to a different travelling-wave peakon in the asymptotic future. 
In particular, 
the asymptotic amplitude and speed of the peakon change by 
$\Delta a = a_{+}-a_{-} = 1$ and $\Delta c = c_{+}-c_{-} = \lambda$, 
while the position shifts asymptotically by 
$\Delta x_0 = x_{+}-x_{-} = -\Delta c\, t_0 +\frac{\lambda}{\kappa}\ln(2)
= \lambda(\kappa^{-1}\ln(2) -t_0)$. 
Note the asymptotic change in speed can be obtained more directly 
from the speed-amplitude relation 
\begin{equation}\label{ex2-speed-rel}
\dot X = g_0(A) = \lambda A
\end{equation}
which follows from the time-evolution ODEs \eqref{ODEs-fg}. 
Moreover, note this relation \eqref{ex2-speed-rel}
shows that the direction of the peakon is determined by the sign of $\lambda$:
the peakon moves in the $\pm x$ direction when $\sgn(\lambda)=\pm 1$. 
See Figure~\ref{fig:asymptotic-travelling-wave-peakon}. 

\begin{figure}[h]
\includegraphics[width=0.6\textwidth]{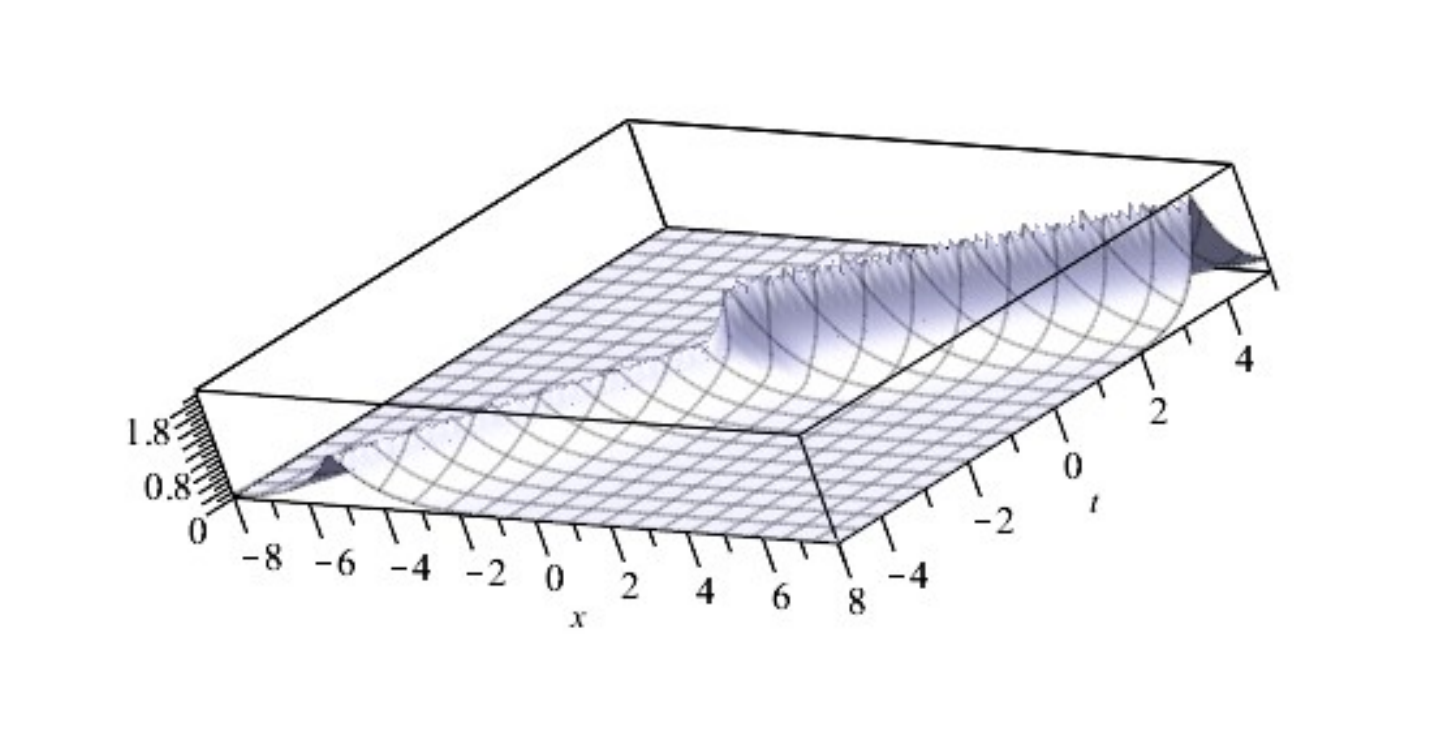}
\caption{Asymptotic travelling-wave peakon}
\label{fig:asymptotic-travelling-wave-peakon}
\end{figure}

The same kind of asymptotic behaviour can be found in more general wave equations:
\begin{equation}\label{ex2a}
m_t + \kappa(u-a_1)(u-a_2)u^p m +\lambda (u^q m)_x=0,
\quad
a_1\neq a_2,
\quad
a_1a_2>0,
\quad
p,q>0.
\end{equation}

\subsection{Direction-reversing peakons}
\label{dir-reverse-peakons}

In the previous family \eqref{ex2} of cubic wave equations,
the dynamical peaked waves \eqref{ex2-AX} do not change direction, 
due to their speed-amplitude relation \eqref{ex2-speed-rel}. 
We will now change the nonlinearity in the $m_x$ term in those wave equations 
so that the speed-amplitude relation dynamically changes sign,
which causes the dynamical peaked waves to reverse direction at some finite time. 

We consider the modified family of cubic nonlinear wave equations
\begin{equation}\label{ex3}
m_t + \kappa(u-2)(u-1) m +\lambda ((3-2u) m)_x=0
\end{equation}
where $\kappa,\lambda$ are non-zero constants. 
For this family, we have $f=\kappa(u-2)(u-1)=f^+$ as before, 
while now $g=\lambda (3-2u)=g^+$. 
Hence, we get $G=\lambda (3-2u) u_x=G^-$ and $F= \kappa(u-2)(u-1) u_x=F^-$,
which yields $g_0=g$ and $f_0=f$ from relation \eqref{f0-F-g0-G}. 

We are again interested in dynamical peaked wave solutions \eqref{1peakon} that are smooth for all $t$. 
Since $f_0$ is unchanged, 
the amplitude of these solutions is also unchanged:
\begin{equation}\label{ex3-A}
A=1 + 1/\sqrt{1+e^{2\kappa (t_0- t)}},
\quad
t_0 =\const
\end{equation}
which has the asymptotic behaviour
\begin{equation}\label{ex3-asympt-A}
A \sim 
\begin{cases}
1  + O(e^{-\kappa |t|}) , & t\to -\infty
\\
2  + O(e^{-2\kappa t}) , & t\to \infty.
\end{cases}
\end{equation}

Since $g_0$ is a linear non-homogeneous function here, 
the speed-amplitude relation now becomes
\begin{equation}
\dot X = \lambda (3-2A)
\end{equation}
which has the feature that $\dot X >0$ when $0<A<\tfrac{3}{2}$
whereas $\dot X <0$ when $A>\tfrac{3}{2}$. 
Thus, 
the speed will change sign when the amplitude passes through the value $\tfrac{3}{2}$. 
From the amplitude expression \eqref{ex3-A},
we find that $A=\tfrac{3}{2}$ occurs at the time $t=-\kappa^{-1}\ln(\sqrt{3})=t^*$.

By applying Proposition~\ref{prop:soln}, 
we obtain the position expression 
\begin{equation}\label{ex3-X}
X= \lambda (t_0-t) - 2(\lambda/\kappa)\ln\big(1+\sqrt{1+e^{2\kappa (t_0 -t)}}\big) +X_0, 
\quad
X_0=\const.
\end{equation}
At the turn-around time $t=t^*$, 
the position is $X(t^*) = -3\lambda\kappa^{-1}\ln(\sqrt{3})$. 
Asymptotically, the position has the behaviour
\begin{equation}\label{ex3-asympt-X}
\begin{aligned}
X -\lambda t & \sim X_0 -\lambda t_0 + O(e^{-\kappa |t|}) , 
\quad
t\to -\infty
\\
X +\lambda t & \sim X_0 + \lambda t_0 -(\lambda/\kappa)\ln(4)+ O(e^{-2\kappa t}), 
\quad
t\to \infty.
\end{aligned}
\end{equation}

The asymptotic behaviour \eqref{ex3-asympt-A} and \eqref{ex3-asympt-X}
describes a travelling-wave peakon \eqref{asympt-peakon} 
with the amplitude $a_{-}=1$, speed $c_{-}=\lambda$, 
and position shift $x_{-}=X_0-c_{-}t_0$
as $t\to -\infty$, 
and with a different amplitude $a_{+}=2$, opposite speed $c_{+}=-\lambda$, 
and position shift $x_{+}=X_0-c_{+}t_0 -\frac{\lambda}{\kappa}\ln(4)$
as $t\to \infty$. 

Therefore, this solution \eqref{ex3-A}, \eqref{ex3-X} is a dynamical peakon 
that evolves from a travelling-wave peakon in the asymptotic past, 
decelerates and reverses direction at the finite time $t^*=-\kappa^{-1}\ln(\sqrt{3})$,
accelerates and then evolves to a different travelling-wave peakon in the asymptotic future. 
In particular, 
the asymptotic amplitude and speed of the peakon change by 
$\Delta a = a_{+}-a_{-} = 1$ and $\Delta c = c_{+}-c_{-} = -2\lambda$. 
See Figure~\ref{fig:direction-reversing-peakon}. 

\begin{figure}[h]
\includegraphics[width=0.6\textwidth]{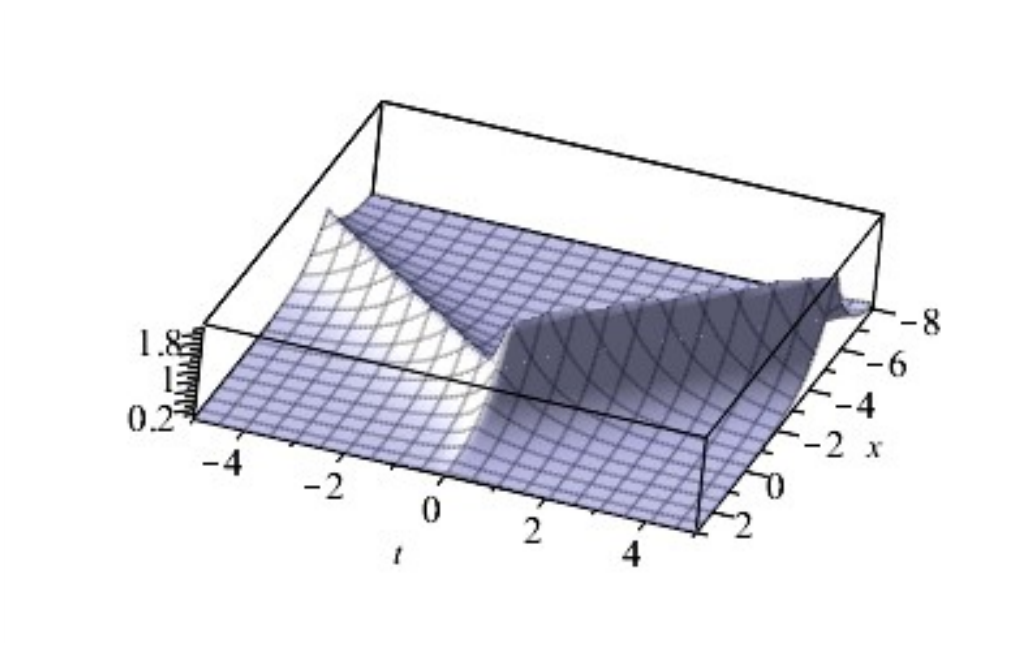}
\caption{Direction-reversing peakon}
\label{fig:direction-reversing-peakon}
\end{figure}

The same kind of asymptotic behaviour can be found in more general wave equations:
\begin{equation}\label{ex3a}
m_t + \kappa(u-a_1)(u-a_2)u^p m +\lambda ((b-u)u^q m)_x=0,
\quad
a_2>b>a_1,
\quad
p,q>0
\end{equation}
and
\begin{equation}\label{ex3b}
m_t + \kappa(u-a_1)(u-a_2)u^p m +\lambda ((u-b_1)(u-b_2))u^q m)_x=0,
\quad
b_2>a_2>b_1>a_1,
\quad
p,q>0
\end{equation}
where $a_1a_2>0$ in both families.

\subsection{Dissipating peakons and blowing up peakons}
\label{dissipative-blowup-peakons}

Dynamical peaked waves that evolve from asymptotic travelling-wave peakons in the asymptotic past
can also exhibit dissipating behaviour in the asymptotic future or blowing-up behaviour in a finite time. 

We consider a family of quadratic nonlinear wave equations
\begin{equation}\label{ex5}
m_t + \kappa (a-u) m +\lambda (u m)_x=0
\end{equation}
where $\kappa,\lambda,a$ are non-zero constants. 
For this family, we have $f=\kappa (a-u)=f^+$ and $g=\lambda u=g^+$,
so thus $F= \kappa (a-u) u_x=F^-$ and $G=\lambda u u_x=G^-$. 
This yields $f_0 = \kappa (a-u)$ and $g_0= \lambda u$ from relation \eqref{f0-F-g0-G}. 

To obtain the dynamical peaked wave solutions \eqref{1peakon}, 
we use Proposition~\ref{prop:soln}. 
The general solution has two different branches, 
one which is smooth for all $t$, 
and one which is has a blow-up at a finite time $t=t^*$. 
Their asymptotic behaviour depends on the sign of $\kappa a$. 
We will proceed by taking $\kappa a>0$. 

First, the smooth solutions are given by 
\begin{equation}\label{AX:ex5-smooth}
A=a/(1+e^{\kappa a(t- t_0)}),
\quad
X= (\lambda/\kappa) \ln\big(2/(1+e^{-\kappa a(t-t_0)})\big) +X_0,
\quad
t_0,X_0=\const
\end{equation}
where $A(t_0)=a/2$ and $X(t_0)=X_0$. 
As $t\to -\infty$, the amplitude and position have the asymptotic behaviour
\begin{equation}
A \sim a + O(e^{-\kappa a|t|}), 
\quad
X -a\lambda t \sim X_0 -a\lambda t_0 +(\lambda/\kappa)\ln(2) + O(e^{-\kappa a|t|}) , 
\end{equation}
which describes a travelling-wave peakon \eqref{asympt-peakon},
with amplitude $a$, speed $c=\lambda a$, 
and position shift $x_0=X_0 -a\lambda t_0 +\frac{\lambda}{\kappa}\ln(2)$. 
This speed-amplitude relation actually holds at all times $t$, 
since $\dot X = g_0(A) =\lambda A$,
as shown by the evolution ODEs \eqref{ODEs-fg}. 

More interestingly, 
as $t\to \infty$ in this solution \eqref{AX:ex5-smooth}, 
the amplitude and its time derivative go to zero. 
Consequently, due to the speed-amplitude relation,
the speed and acceleration also go to zero, 
while 
\begin{equation}
X \sim X_0 +(\lambda/\kappa)\ln(2) + O(e^{-\kappa a|t|})
\end{equation}
remains finite. 
Hence, the peakon undergoes asymptotic braking, 
and reaches the position 
$X_\infty = X_0 +\frac{\lambda}{\kappa}\ln(2)$,
while the amplitude goes extinct. 

This smooth solution \eqref{AX:ex5-smooth} 
therefore describes a dynamical decelerating peakon
that evolves from a travelling-wave peakon in the asymptotic past 
to a dissipating peakon that goes extinct at a finite position in the asymptotic future. 
See Figure~\ref{fig:dissipating-peakon}. 

\begin{figure}[h]
\includegraphics[width=0.6\textwidth]{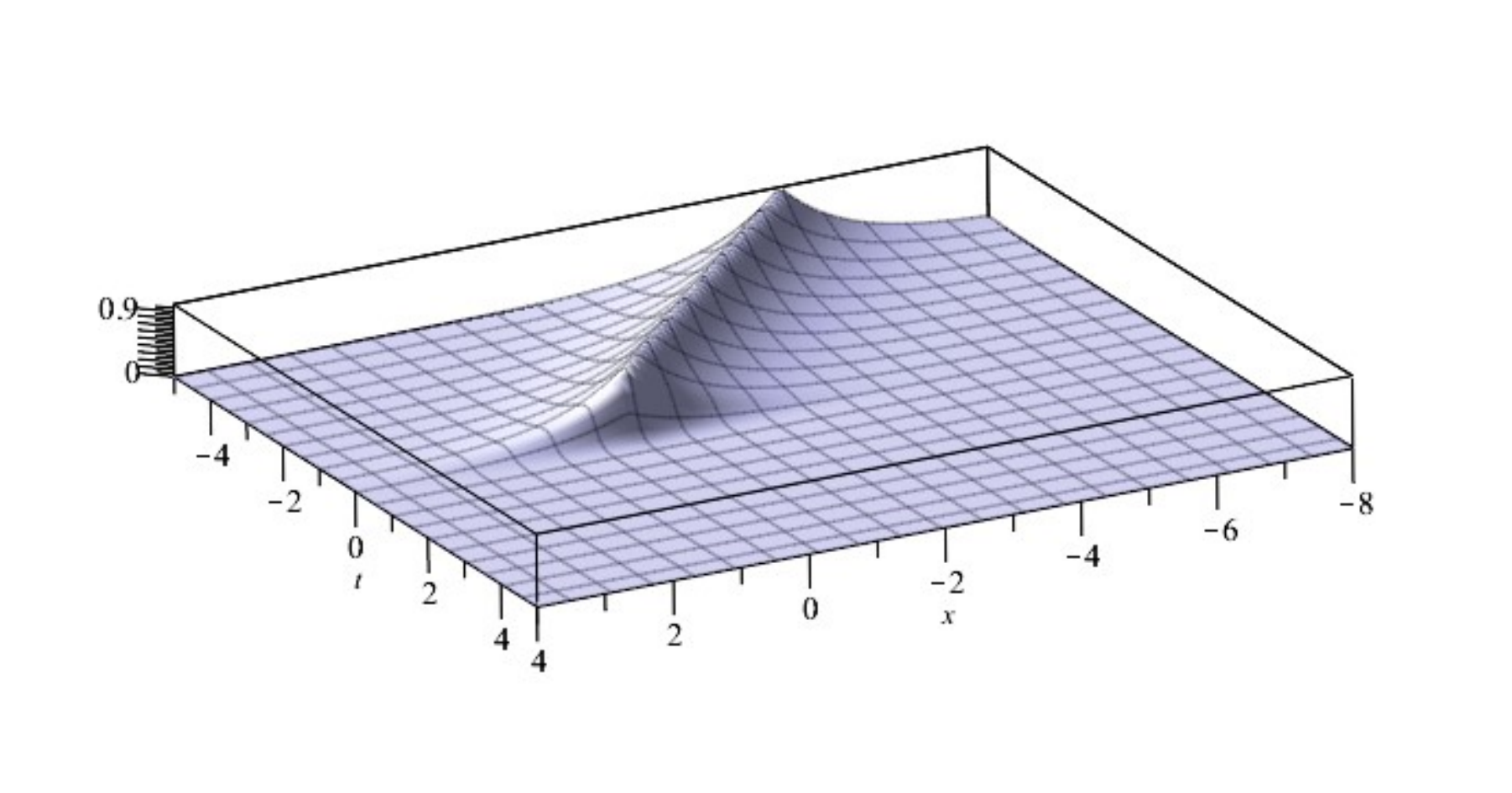}
\caption{Dissipating peakon}
\label{fig:dissipating-peakon}
\end{figure}

We remark that this behaviour can be reversed in time by changing the sign of $\kappa a$ from positive to negative. 
The resulting solutions then describe a dynamical decelerating peakon 
that begins from extinction at a finite position in the asymptotic past, 
and evolves as an anti-dissipating peakon that becomes a travelling-wave peakon in the asymptotic future. 

Next, the blow-up solutions are given by 
\begin{equation}\label{AX:ex5-blowup}
A=a/(1-e^{\kappa a(t- t_0)}),
\quad
X= -(\lambda/\kappa)\ln\big(1-e^{-\kappa a(t-t_0)}\big) +X_0,
\quad
t_0,X_0=\const.
\end{equation}
As $t\to -\infty$, the amplitude and position have the asymptotic behaviour
\begin{equation}
A \sim a + O(e^{-\kappa a|t|}), 
\quad
X -a\lambda t \sim X_0 -a\lambda t_0 + O(e^{-\kappa a|t|}) , 
\end{equation}
which describes a travelling-wave peakon \eqref{asympt-peakon},
with amplitude $a$, speed $c=\lambda a$, and position shift $x_0=X_0 -a\lambda t_0$. 
When $t$ reaches $t^*=t_0<\infty$, 
both the amplitude and the position become unbounded,
$A\to \infty$ and $X\to \infty$ as $t\to t^*$. 
Moreover, the speed and the acceleration also become unbounded, 
$\dot X,\ddot X\to \infty$ as $t\to t^*$,
due to the speed-amplitude relation $\dot X = g_0(A) =\lambda A$. 
Hence, the peakon undergoes a runaway blow-up in a finite time. 

This solution \eqref{AX:ex5-blowup} 
therefore describes a dynamical accelerating peakon
that evolves from a travelling-wave peakon in the asymptotic past 
to a runaway blow-up in a finite time. 
See Figure~\ref{fig:blowing-up-peakon}. 

\begin{figure}[h]
\includegraphics[width=0.5\textwidth]{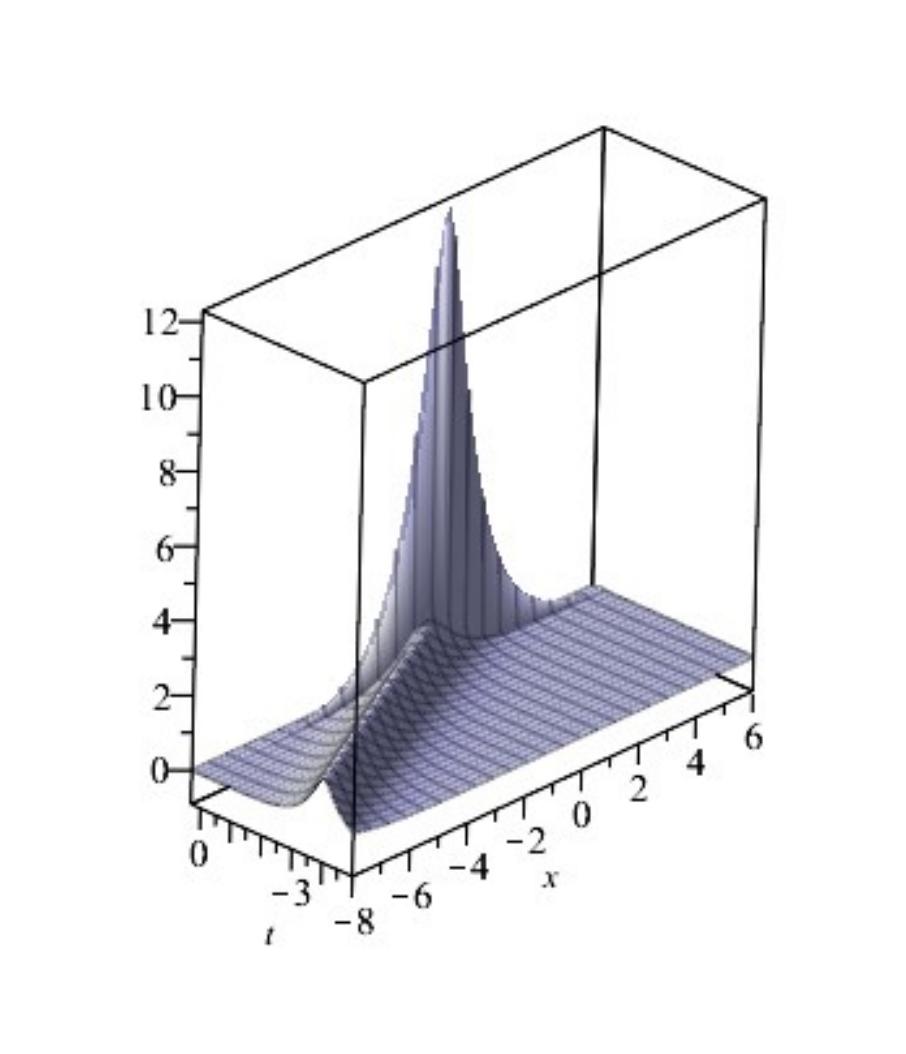}
\caption{Blowing up peakon}
\label{fig:blowing-up-peakon}
\end{figure}

Similar behaviour can be shown to arise
for wave equations with higher-power nonlinearities:
\begin{equation}\label{ex5a}
m_t + \kappa (a-u)u^p m +\lambda (u^{q+1} m)_x=0,
\quad
p,q >0.
\end{equation}

\subsection{Peakon breathers}
\label{peakon-breathers}

A breather is a dynamical wave whose amplitude at each point $x$ is oscillatory in $t$. 
Peakon breathers have been found recently for a new NLS-type peakon equation \cite{AncMob}
$im_t +\tfrac{1}{2}(|u|^2-|u_x|^2)m +i(\Im(\bar u u_x)m)_x=0$
by considering a stationary peakon modified by an oscillatory phase, 
$u(x,t) = a e^{i\omega t} e^{-|x|}$, 
where the frequency is given by $\omega=\tfrac{1}{3}a^3$ in terms of the peakon amplitude $a$. 

Here we will construct a different type of peakon breather 
which is given by a dynamical peaked wave solution 
to certain nonlinear dispersive wave equations contained in the general family \eqref{fg-fam}. 

We start from the expression for a travelling-wave peakon 
$a e^{-|x-ct-x_0|}$ 
and multiply it by the oscillatory factor $\cos(\kappa t)$ to get 
\begin{equation}\label{breather}
u(x,t) = a\cos(\kappa t)  e^{-|x-ct -x_0|}
\end{equation}
where $c$ and $\kappa\neq0$ are constants. 
This yields a peakon breather with an oscillation frequency $\kappa$ 
which is independent of the peak amplitude $a$. 
The speed $c$ is also independent of the peak amplitude $a$ 
and can be positive, negative, or zero. 
See Figure~\ref{fig:peakon-breather}. 

\begin{figure}[h]
\includegraphics[width=0.6\textwidth]{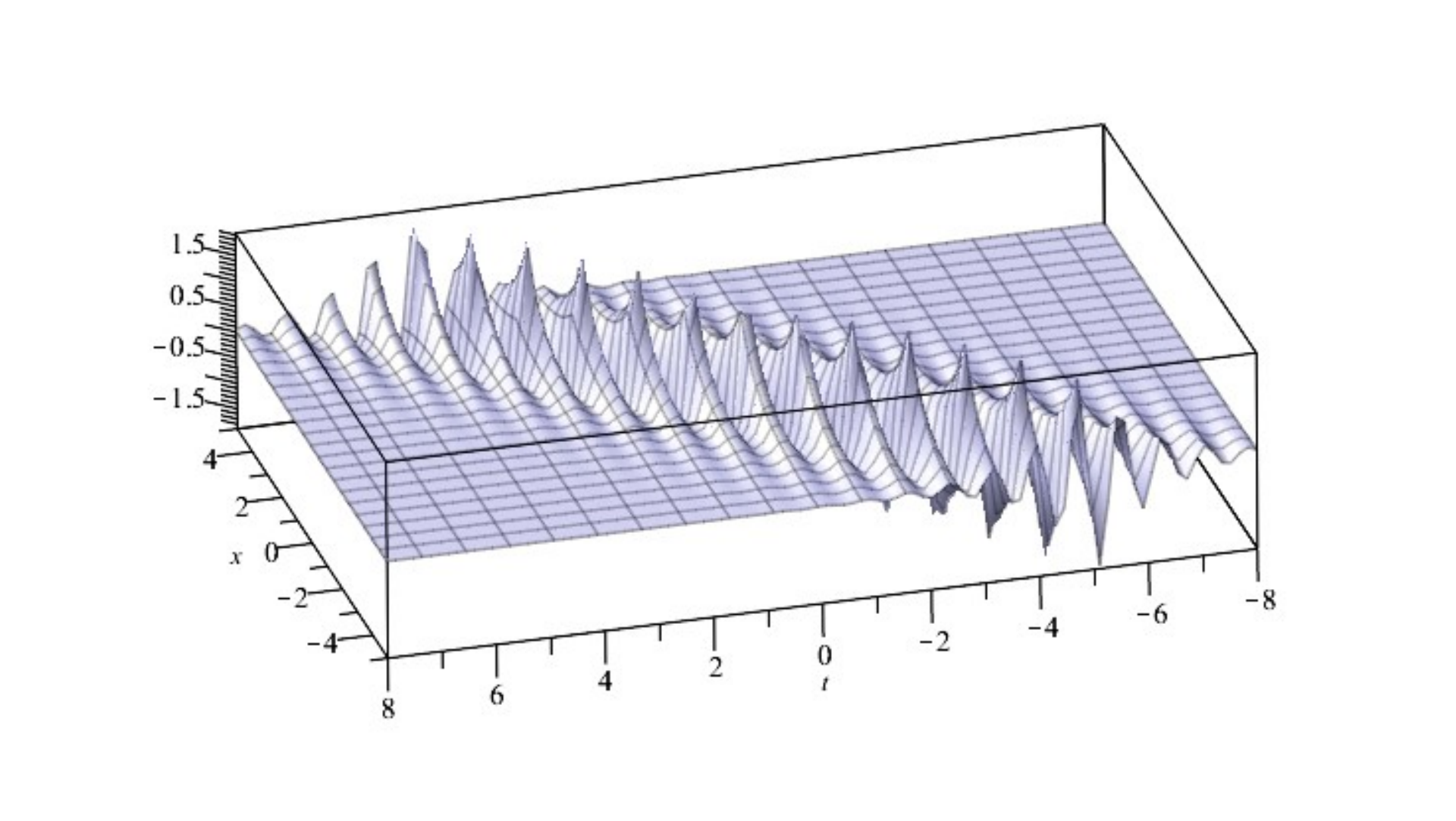}
\caption{Peakon breather}
\label{fig:peakon-breather}
\end{figure}

To proceed, we now determine the nonlinearities $f(u,u_x)$ and $g(u,u_x)$
such that a wave equation \eqref{fg-fam} admits this peakon breather \eqref{breather}
as a dynamical peaked wave solution \eqref{1peakon}. 
Matching the breather expression \eqref{breather} to the general solution expression \eqref{1peakon}, 
we get $A(t) =a\cos(\kappa t)$ and $X(t)=ct+x_0$,
from which we find 
$\dot A = -a\kappa \sin(\kappa t) = -\kappa (a^2-A^2)^{1/2}$ and $\dot X(t) = c$. 
The evolution ODEs \eqref{ODEs-fg} then yield 
$f_0(u) = \kappa (a^2-u^2)^{1/2}/u =\kappa ((a/u)^2-1)^{1/2}$ and $g_0(u)=c$. 
Hence, we can take $f=f_0$ and $g=g_0$. 

This gives the family of wave equations 
\begin{equation}\label{ex-breather}
m_t + \kappa ((a/u)^2-1)^{1/2} m +c m_x=0,
\quad
a\neq 0
\end{equation}
for which the peakon breather is a dynamical peaked wave solution \eqref{1peakon}. 

More general peakon breathers can be obtained in a similar fashion. 
In particular, we can replace $A(t)=a\cos(\kappa t)$ by any periodic function $\phi(t)$,
with period $T>0$. 
The corresponding nonlinearities are given by 
$f_0(u)= -\phi'(\phi^{-1}(u))/u$ and $g_0(u)=c$,
yielding the wave equation $m_t -(\phi'(\phi^{-1}(u))/u)m + c m_x=0$.

\section{Concluding remarks}
\label{remarks}

We have introduced and studied a novel generalization of peakons,
in which the amplitude and the speed are time-dependent dynamical variables \eqref{1peakon}. 
These dynamical peakons arise for a large class of nonlinear dispersive wave equations
which belong to a general class \eqref{fg-fam} whose nonlinearities are given by 
two arbitrary non-singular functions $f(u,u_x)$ and $g(u,u_x)$. 
All equations in this general class possess $N$-peakon weak solutions, 
where the $N=1$ solutions correspond to either travelling-wave peakons or dynamical peakons. 
We have derived explicit conditions on $f(u,u_x)$ and $g(u,u_x)$ that characterize
each of these two kinds of peakons,
and we have also obtained a simple condition for when a dynamical peakon has non-zero acceleration. 

Through examples, 
we have shown that dynamical accelerating peakons can exhibit a wide variety of interesting behaviour. 
In particular, for their amplitude:
\begin{enumerate}[$\bullet$]
\item 
finite-time blow-up or extinction;
\item 
long-time unboundedness, or extinction, or a finite asymptote;
\end{enumerate}
and for their speed and acceleration:
\begin{enumerate}[$\bullet$]
\item
finite-time braking or runaway or a wheelspin-limit;
\item
long-time braking, runaway, or a finite asymptotic limit;
\item 
direction reversal. 
\end{enumerate}
Combinations of these different behaviours produce a plethora of new kinds of peakons,
including 
\begin{enumerate}[$\bullet$]
\item
asymptotic travelling-wave peakons;
\item
asymptotically dissipating/anti-dissipating peakons; 
\item 
blowing-up peakons;
\item 
peakon breathers. 
\end{enumerate}

Such dynamical behaviour has not been seen previously
in any of the nonlinear dispersive wave equations 
studied to date in the literature.

Several open directions of work remain to be done. 
One direction is understanding interactions of these dynamical peakons given by $N$-peakon solutions \eqref{Npeakon}, 
and determining when interactions produce blow-up and runaway behaviour,
as well as the general asymptotic behaviour when the interactions are non-singular. 
Another direction is studying the Cauchy problem for the underlying nonlinear dispersive wave equations to establish local well-posedness, and finding results on global behaviour,
such as whether wave breaking that occurs for nonlinear dispersive wave equations 
with travelling-wave peakons also occurs for the more general class of nonlinear dispersive wave equations with dynamical accelerating peakons. 
In particular, is there a counterpart of the theorems for soliton equations where 
general initial data evolves asymptotically into a train of solitons?

\section*{Acknowledgements}
S.C.A. is supported by an NSERC grant and thanks the Universidad de C\'adiz for support during a visit in which this work was initiated.

The reviewers are thanked for providing helpful comments and references.

\end{document}